\newcommand{\mybibliography}{\renewcommand{\baselinestretch}{1}\small\normalsize\bibliography{mybibdata}}
\numberwithin{equation}{section}
\newcommand{\comment}[1]{}
\def\@seccntformat#1{\@ifundefined{#1@cntformat}{\csname the#1\endcsname\hspace{.5em}}{\csname #1@cntformat\endcsname}}
\def\section@cntformat{\thesection.\hspace{.5em}}
\newcounter{mylistcounter}
\newcommand\lilhead{\@startsection{paragraph}{4}{0em}{.1in}{.07in}{\bfseries\itshape}}
\newtheorem{theorem}{Theorem}
\newtheorem{proposition}{Proposition}
\newtheorem{corollary}{Corollary}
\newtheorem{claim}{Claim}
\newtheorem{fact}{Fact}
\newtheorem*{rep@theorem}{\rep@title}
\newcommand{\newreptheorem}[2]{%
\newenvironment{rep#1}[1]{%
 \def\rep@title{#2 \ref{##1}}%
 \begin{rep@theorem}}%
 {\end{rep@theorem}}}
\newtheorem{definition}{Definition}
\newtheoremstyle{examplestyle}{}{}{}{}{\itshape}{}{.5em}{\thmname{#1}\thmnumber{ #2.}\thmnote{ #3.}}
\theoremstyle{examplestyle}
\newtheorem{remark}{Remark}
\newtheorem{factproof}{Proof of Fact}
\DeclareFontFamily{OT1}{pzc}{}
\DeclareFontShape{OT1}{pzc}{m}{it}{<-> s * [1.200] pzcmi7t}{}
\DeclareMathAlphabet{\mathscr}{OT1}{pzc}{m}{it}
\DeclareMathOperator*{\argmax}{argmax}
\renewcommand{\ast}{{\mathlarger *}}
\definecolor{green}{RGB}{0, 130, 0}
\title{
Instability of Defection in the Prisoner's Dilemma Under Best Experienced
Payoff Dynamics
\thanks{We wish to thank Luis R. Izquierdo, Segismundo S. Izquierdo, Ron Peretz, William Sandholm 
and two anonymous referees for various helpful comments and suggestions. YH and SA gratefully acknowledge the financial support of the European Research Council (starting grant 677057). SA gratefully acknowledges the financial support of the Sandwich research fellowship of Bar-Ilan University and the Israeli Council of Higher Education.}}
\author{Srinivas Arigapudi\thanks{
Department of Economics,
University of Wisconsin, 
1180 Observatory Drive, 
Madison, WI  53706.
e-mail: \sae; 
website: \saw.}
,\hspace{1pt} 
Yuval Heller\thanks{
Department of Economics,
Bar-Ilan University, 
Ramat Gan 5290002, Israel.
e-mail: \hvte; 
website: \hvtw.}
,\hspace{1pt} 
Igal Milchtaich\thanks{
Department of Economics,
Bar-Ilan University, 
Ramat Gan 5290002, Israel.
e-mail: \whse; 
website: \whsw.}
} 
\date{\today}
\begin{document}
\maketitle
\noindent Final pre-print of a manuscript accepted for publication in the \emph{Journal of Economic Theory}.
\begin{abstract}

We study population dynamics under which each revising agent tests each action $k$ times, with each trial being against a newly drawn opponent, and chooses the action whose mean payoff was highest {during the testing phase}. When $k=1,$ defection is globally stable in the prisoner's dilemma. By contrast, when $k>1$ we show that, 
{if the gains from defection are not too large,}
 there exists a globally stable state in which agents cooperate with probability between $28\%$ and $50\%$. Next, we characterize stability of strict equilibria in general games. 
Our results demonstrate that the empirically plausible case of $k>1$ can yield qualitatively different predictions than the case $k=1$ commonly studied in the literature.\\

\textbf{Keywords: }learning, cooperation, best experienced payoff dynamics, sampling equilibrium, evolutionary stability.  \textbf{JEL codes: }C72, C73.
\end{abstract}

\section{Introduction}
The standard approach in game theory assumes that players
form beliefs about the various uncertainties they face and then best respond to these beliefs. In equilibrium, the beliefs will be correct and the players will play a Nash equilibrium. However, in some 
economic environments where the players have limited information about the strategic situation, Nash equilibrium prediction is hard to justify. Consider the following example from \cite{osborne1998games}. 
You are new to town and are planning your route to work. How do you decide which road to take? You know that other people use the roads, but have no idea which road is most congested. One plausible procedure is to try each route several times and then permanently adopt the one that was (on average) best. The outcome of this procedure is stochastic: you may sample the route that is in fact the best on a day when a baseball game congests it. Once you select your route, you become part of the environment that determines other drivers' choices.

This procedure is formalized as follows. Consider agents in a large population who are randomly matched to play a symmetric $n$-player game with a finite set of actions. Agents  occasionally revise their action 
(which can also be interpreted as agents occasionally leaving the population and being replaced by new agents
who base their behavior on the sampling procedure, as in the motivating example above).
Each revising agent samples each feasible action $k$ times and chooses the action that yields the highest average payoff (applying some tie-breaking rule).

This procedure induces a dynamic process according to which the distribution of actions in the population evolves (best experienced payoff dynamics: \citealp{sethi2000stability, sandholm2019best}). An $S(k)$ equilibrium {$\alpha^\ast$ is a rest point of the above dynamics.  The equilibrium is locally stable if any distribution of actions in the population that is sufficiently close to $\alpha^\ast$ converges to  $\alpha^\ast$, and 
globally stable if any distribution of actions in the population with support that includes all actions converges to $\alpha^\ast.$}

The existing literature on payoff sampling equilibria (as surveyed below) has mainly focused on $S(1)$ equilibria, due to their tractability. It seems plausible that real-life behavior would rely on sampling each action more than once. A key insight of our analysis is that sampling actions several times might lead to qualitatively different results than sampling each action only once. In particular, in the prisoner's dilemma game,  $S(1)$ dynamics yield the Nash equilibrium behavior, while $S(k)$ dynamics (for $k>1$) {may} induce substantial cooperation. 

Recall that  each player in the prisoner's dilemma game has two actions, cooperation  $c$ and defection $d$, and the payoffs are as in Table \ref{tab:Payoff-Matrix-of}, where $g, l > 0.$ \cite{sethi2000stability} has shown that defection is the unique $S(1)$ globally stable  equilibrium. 
By contrast, our first main result (Theorem \ref{thm:mainresult}) shows  that for any $k\geq2$, a game for which the gains from defection are not too large (specifically, $g,l<\frac{1}{k-1}$) admits a globally stable state in which the rate of cooperation is between $28\%$ and $50\%$.

\begin{table}[h]
\centering{}
\begin{tabular}{cc|cc}
  &  & \textcolor{red}{\emph{c }}\textcolor{red}{{} } & \textcolor{red}{\emph{d}}\tabularnewline
\cline{2-4} \cline{3-4} \cline{4-4}
\multirow{2}{*} & \textcolor{blue}{\emph{c}}\textcolor{blue}{{} } & \textcolor{blue}{1}~~,~~\textcolor{red}{1}  & \textcolor{blue}{~~-$l$}~,~\textcolor{red}{1+$g$} \tabularnewline
 & \textcolor{blue}{\emph{d }}\textcolor{blue}{{} } & \textcolor{blue}{1+$g$}~,~\textcolor{red}{-$l$~~~}  & \textcolor{blue}{0}~~,~~\textcolor{red}{0} \tabularnewline
\end{tabular}
\begin{centering}
\caption{
Prisoner's Dilemma Payoff Matrix (${\color{blue}g},{\color{red}l}>0$) \label{tab:Payoff-Matrix-of}}
\par\end{centering}
\end{table}

{Our remaining results characterize the local stability of strict equilibria for $k \ge 2$. Proposition \ref{prop:Sk-strict-PD} shows that defection in the prisoner's dilemma game is locally stable iff $l>\frac{1}{k-1}$.
Theorem \ref{thm:unstable sn} extends the analysis to general symmetric games.}
 It presents a simple necessary and sufficient condition for {a strict symmetric equilibrium action $a^\ast$} to be $S(k)$ locally stable ({improving on} the conditions presented in \citealp{sethi2000stability, sandholm2020stability}). Roughly speaking, the condition is that in any set of actions $A^\prime$ that does not include $a^\ast$ there is an action that never yields the highest payoff when the corresponding sample includes a single occurrence of an action in $A^\prime$ and all the other sampled actions are $a^\ast$. 
 {
 Theorem \ref{thm:asymmetric unstable sn} extends the characterization of local stability of strict equilibria to general asymmetric games.}

\textbf{Outline}: {In the remaining parts of the Introduction we review the related literature, and compare our  predictions with the experimental findings.} In Section \ref{sec:model}, we introduce our model and {the} solution concept. We analyze the prisoner's dilemma in Section \ref{sec:3}, and characterize the stability of strict equilibria in general symmetric games in Section \ref{sec:symmetric games}. 
{An extension of the analysis to asymmetric games is presented in Section \ref{sec:asymmetric games}.}

\subsection{{Related Experimental Literature and Testable Predictions}\label{subsec:predictions}}  
{
The typical length of a lab experiment, as well as the subjects' cognitive costs, are likely to limit the sample sizes used
by subjects to test the various  actions  to small values such as $k=2$ or $k=3$ (because larger samples induce 
too-high costs of non-optimal play during the sampling stage, and require larger cognitive effort to analyze). Proposition \ref{pro:k-2-3} shows that for these small sample sizes of $k=2$ or $3$, the $S(k)$ dynamics admit a unique globally stable equilibrium, which depends 
on the parameter $l$. Specifically, everyone defecting is the globally stable equilibrium if $l>\frac{1}{k-1}$, while there is a substantial rate of cooperation between $24\%$ and $33\%$  if $l<\frac{1}{k-1}.$}

{The predictions of our model  match quite well the empirical findings of the meta-study of \citet{mengel2018risk}  concerning the behavior of subjects playing the one-shot prisoner's dilemma.  \citet[Tables A.3, B.5]{mengel2018risk} summarizes 29 sessions of lab experiments of that game in a ``stranger'' (one-shot) setting from 16 papers (with various values of $g$ and $l$, {both} with median 1; {the distribution of values is presented in Appendix \ref{sec-experiments-g-l}}). The average rate of  cooperation in these experiments is $37\%$.
Our predictions are also broadly consistent with the experimentally observed comparative statics with respect to $g$ and $l$, which is that the rate of cooperation is decreasing in $l$ but is independent of $g$ (see \citealp[Table B.5]{mengel2018risk}, where $l$ is called $RISK^{Norm}$ and $g$ is called $TEMPT^{Norm}$).}\footnote{{Theorem 
 \ref{thm:mainresult} shows that for $g$ that is not too large (specifically, $g<\frac{1}{k-1}$), the minimal rate of cooperation in the globally stable state is $28\%$ (when $l<\frac{1}{k-1}$). Proposition \ref{pro:k-2-3} allows arbitrary large $g$, which has the modest impact of slightly decreasing the minimal globally stable rate of cooperation to 24\%.}}

The empirically observed average cooperation rate of $37\%$ can also be explained by other theories. Specifically, it can be explained by agents making errors when the payoff differences are small (quantal response equilibrium, \citealp{mckelvey1995quantal}), or by agents caring about the payoffs of cooperative opponents (inequality aversion $\grave{\textrm{a}}$ la \citealp{fehr1999theory}, and reciprocity $\grave{\textrm{a}}$ la \citealp{rabin1993incorporating}). Our model has two advantages in comparison with these alternative models. First, our model is  parameter free (for a fixed $k$), while the existing models may require tuning their parameters to fit the experimental data (such as the parameter describing the agents' error rates in a quantal response equilibrium). 

Second, the predictions of the existing models are arguably less compatible with the above-mentioned experimentally observed comparative statics. Quantal response equilibrium predicts that the cooperation rate decreases in both parameters. The other models predict that the cooperation rate decreases in $g$ (because an increasing $g$ increases the material payoff from defecting, while it does not change the payoff of a cooperative opponent), and their prediction with respect to $l$ is ambiguous, because increasing $l$ has two opposing effects: increasing the material gain from defection against a defecting opponent but decreasing the payoff of a cooperating opponent.

Our predictions might have an even better fit with experiments in which subjects have only partial information about the payoff matrix (a setting that might be relevant to many real-life interactions),
such as a ``black box'' setting in which players  do not know the game's structure and observe only their realized payoffs (see, e.g., \citealp{nax2015directional, nax2016learning,burton2017social}).

\subsection{Related {Theoretical} Literature}\label{sec:related literature}
The payoff sampling dynamics approach employed in this paper was pioneered by \cite{osborne1998games} and \cite{sethi2000stability}. The approach has been used in a variety of applications, including 
bounded-rationality models in industrial organization (\citealp{spiegler2006competition,spiegler2006market}), 
coordination games (\citealp{ramsza2005stability}), trust and delegation of control (\citealp{rowthorn2008procedural}), market entry (\citealp{chmura2011minority}), ultimatum games (\citealp{mikekisz2013sampling}), common-pool resources (\citealp{cardenas2015stable}), contributions to public goods (\citealp{mantilla2018efficiency}), and finitely repeated games (\citealp{Raj}).

Most of these papers mainly focus on $S(1)$ dynamics, in  which each action is only sampled once.\footnote{{ See also the variant of the $S(1)$ dynamics presented in \cite{rustichini2003equilibria}, according to which after an initial phase of sampling each action once, each player in each round chooses the action that has yielded the highest average payoff so far.}}
One exception is \cite{sandholm2019best}, which analyzes the stable $S(k)$ equilibrium in a centipede game and 
shows that it involves cooperative behavior even when the number of trials $k$ of each action is large. Another is \cite{sandholm2020stability}, which presents general stability and instability criteria of $S(k)$ equilibria in general classes of games, thus providing a unified way of deriving many of the specific results the above papers derive, as well as several new results.

A related, alternative approach is \emph{action sampling dynamics} (or sample best-response dynamics), according to which each revising agent obtains a small random sample of other players' actions, and chooses the action that is a best reply to that sample (see, e.g., \citealp{sandholm2001almost, kosfeld2002myopic, kreindler2013fast, oyama2015sampling, heller2018social, salant2020statistical}). The action sampling approach is a plausible heuristic when the players know the payoff matrix and are capable of strategic thinking but do not know the exact distribution of actions in the population.

\section{Model}\label{sec:model}
We consider a unit-mass continuum of agents who are randomly matched to play a symmetric {$n$-player} game $G=\{A,u\}$, where $A = \{a_1, a_2, \dots, a_m\}$ is the (finite) set of actions and $u: A^n \rightarrow \mathbb{R}$ is the payoff function,
{which is invariant to permutations of its second through $n$-th arguments. An agent taking action $a^1$ against  opponents playing $a^2, \dots,a^n$, in any order, receives payoff $u(a^1,a^2, \dots,a^n)$.}

Aggregate behavior in the population is described by a \textit{population state} $\alpha$ lying in the unit simplex $\Delta \equiv \{\alpha =(\alpha_{a_i})_{i=1}^m \in \mathbb{R}^m_{+} \ | \ \sum_{i=1}^m \alpha_{a_i} = 1\},$ with $\alpha_{a_i}$ representing the fraction of agents in the population using action $a_i .$ The standard basis vector $e_a \in \Delta$ represents the pure, or monomorphic, state in which all agents play action $a.$ 
Where no confusion is likely, we identify the action with the monomorphic state, denoting them both by $a$.
The set of \emph{interior population states}, in which all actions are used by a positive mass of agents, is $Int(\Delta)\equiv \Delta \cap \mathbb{R}^m_{++}$.

A sampling procedure involves the testing of the different actions against randomly drawn opponents, as explained next. Agents occasionally receive opportunities to switch actions 
(equivalently, this can be thought of as agents dying and being replaced by new agents). These opportunities do not depend on the currently used actions. That is, 
when the population state is $\alpha(t)$, the proportion of agents originally using an action $a$ out of the 
agents who revise between time $t$ and $t+dt$ is equal to their proportion in the population $\alpha_{a}(t)$.  

When an agent receives a revision opportunity, he tries each of the feasible actions 
$k$ times, using it each time against a newly drawn opponent from the population. Thus, the probability that the opponent's action is any $a \in A$ is $\alpha_{a}(t)$. The agent then chooses the action that yielded the highest mean payoff in these trials, employing some tie-breaking rule if more than one action yields the highest mean payoff. All of our results hold for any tie-breaking rule. 
Denote the probability that the chosen action is $a$ by $w_{a,k}(\alpha(t))$.

As a result of the revision procedure described above, the expected change in the number of agents using an action $a$
during an infinitesimal time interval of duration $dt$ is
\begin{equation}\label{eq:expected change}
w_{a,k}(\alpha(t))dt - \alpha_a(t)dt.
\end{equation}The first term in \eqref{eq:expected change} is an inflow term, representing the expected number of revising agents who switch to action $a$, while the second term is an outflow term, representing the expected number of revising agents who are currently playing that action. In the limit $dt\to 0,$ the rate of change of the fraction of agents using each action is given {in vector notation by}  
\begin{equation}\label{eq:BEP}
    \dot{\alpha} = w_{k}(\alpha(t)) - \alpha(t),
\end{equation}{where $w_k$ is a vector whose $a$-th component is $w_{a,k}$.} The system of differential equations \eqref{eq:BEP} is called the $k$-\textit{payoff sampling dynamic}. Its rest points are called $S(k)$ equilibria.

\begin{definition}[\citealp{osborne1998games}]\label{def:rest-point}
\emph{A population state $\alpha^\ast\in\Delta$ is an \emph{$S(k)$ equilibrium} if  $ w_{k}(\alpha^\ast) = \alpha^\ast$.}
\end{definition}

An equilibrium is (locally) asymptotically stable if a population beginning near it remains close and eventually converges to the equilibrium, and it is (almost) globally asymptotically stable if the population converges to it from any initial interior state. 

\begin{definition}\label{def:local-stability}
\emph{An $S(k)$ equilibrium $\alpha^{*}$ is \emph{
asymptotically stable} if:}
\emph{
\begin{enumerate}
\item
(\emph{Lyapunov stability}) for every neighborhood $U$ of $\alpha^{*}$ in $\Delta$ there is a neighborhood $V \subset U$ of $\alpha^{*}$ such that if $\alpha(0) \in V $, then $\alpha(t) \in U $ for all $t >0$; and
\item
there is some neighborhood $U$ of $\alpha^{*}$ in $\Delta $ such that all trajectories initially in $U$ converge to $\alpha^{*}$; that is, 
$\alpha\left(0\right)\in U$ implies $\lim_{t\rightarrow\infty}\alpha\left(t\right)=\alpha^{*}.$
\end{enumerate}
}
\end{definition}

\begin{definition}\label{def:global-stability}
\emph{An $S(k)$ equilibrium $\alpha^{*}$ is \emph{
globally asymptotically stable} if 
all interior trajectories converge to $\alpha^{*}$; that is, 
$\alpha\left(0\right)\in Int(\Delta)$ implies $\lim_{t\rightarrow\infty}\alpha\left(t\right)=\alpha^{*}.$}
\end{definition}

\section{
{The Prisoner's Dilemma}}\label{sec:3}
This section focuses on the  {(two-player)} prisoner's dilemma game. The set of actions is given by $A = \{c, d\}$, where $c$ is interpreted as cooperation and $d$ as defection. The payoffs are as described in Table \ref{tab:Payoff-Matrix-of}, with $g,l>0$: when both players cooperate they get payoff 1, when they both defect they get 0, and when one player defects and the other cooperates, the defector gets 1+$g$ and the cooperator gets $-l$.

\subsection{{Stability of Defection}}\label{sec:prelim analysis}\label{sec:strict}

{
\citet[Example 5]{sethi2000stability} analyzes the $S(1)$ dynamics and shows that everyone defecting is globally stable.} 

\begin{claim}
[\citealp{sethi2000stability}]
\label{prop:S1}
\label{claim:S1}
{Defection} is $S(1)$ globally asymptotically stable. 
\end{claim}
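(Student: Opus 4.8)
The plan is to reduce the dynamics to a scalar autonomous ODE on the fraction of cooperators and then solve it explicitly. Write $x \equiv \alpha_c$ for the mass of cooperators, so that $\alpha_d = 1-x$ and the population state is pinned down by $x \in [0,1]$. The first step is to compute $w_{c,1}(\alpha)$, the probability that a revising agent adopts $c$. Under $S(1)$ a revising agent tests $c$ once and $d$ once, against two independently drawn opponents; from Table \ref{tab:Payoff-Matrix-of} the $c$-trial yields $1$ when its opponent plays $c$ (probability $x$) and $-l$ otherwise, while the $d$-trial yields $1+g$ when its opponent plays $c$ (probability $x$) and $0$ otherwise. Because $g,l>0$, the four realizable payoff values $1,-l,1+g,0$ are pairwise distinct, so ties never occur and the choice of tie-breaking rule is irrelevant, as promised in the model section.

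The second step is a case enumeration over the two independent opponent draws. The agent retains $c$ only when the $c$-trial payoff strictly exceeds the $d$-trial payoff, and inspecting the four outcomes shows this happens exactly when the $c$-trial meets a cooperator (payoff $1$) while the $d$-trial meets a defector (payoff $0$). In each of the other three outcomes the $d$-trial payoff is strictly larger (it is $1+g>1$ against a cooperator, and $0>-l$ against a defector), which is just the manifestation of $d$ weakly dominating $c$. Hence the only favorable outcome for $c$ has joint probability $x(1-x)$, giving
\begin{equation*}
w_{c,1}(\alpha) = x(1-x).
\end{equation*}
Substituting into \eqref{eq:BEP} yields $\dot x = w_{c,1}(\alpha) - x = x(1-x) - x = -x^2$.

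The final step is to read global stability off the equation $\dot x = -x^2$. Its right-hand side is nonpositive and vanishes only at $x=0$, so $e_d$ is the unique $S(1)$ equilibrium; separating variables gives $x(t) = 1/\bigl(t + 1/x(0)\bigr)$ for every $x(0)>0$, so $x(t)\downarrow 0$ from every interior initial condition. Consequently every interior trajectory converges monotonically to the monomorphic defection state $e_d$, which also secures Lyapunov stability, establishing that defection is $S(1)$ globally asymptotically stable.

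I expect no genuine obstacle here: the entire content lies in the correct bookkeeping of the four opponent-pair outcomes together with the observation that strict dominance of $d$ forces the agent to keep $d$ in three of them, producing the clean reduced law $\dot x=-x^2$. The only point requiring a careful word is that $g,l>0$ makes every realized-payoff comparison strict, so the conclusion holds independently of the tie-breaking rule.
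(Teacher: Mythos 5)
Your proposal is correct and follows essentially the same route as the paper: both reduce to the observation that the $c$-sample beats the $d$-sample exactly when the $c$-trial opponent cooperates and the $d$-trial opponent defects, yielding $w_{c,1}=\alpha_c\alpha_d$ and hence $\dot\alpha_c=-\alpha_c^2$. Your explicit integration of this ODE merely fills in the step the paper dismisses as ``easy to see.''
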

{The argument behind Claim \ref{claim:S1} is as follows. When an agent samples the action $c$ (henceforth, the \emph{$c$-sample})
her payoff  is higher than when sampling the action $d$ (henceforth, the \emph{$d$-sample})  iff the opponent has cooperated in the $c$-sample and defected in the $d$-sample (which happens with probability $\alpha_c \cdot \alpha_d$). Therefore, the $1$-payoff sampling dynamic is given by
\begin{align*}\dot{\alpha}_{c} & =w_{c,1}(\alpha)-\alpha_{c}=\alpha_{c} \cdot \alpha_{d}-\alpha_{c}=\alpha_{c}(1-\alpha_{c})-\alpha_{c}=-\alpha_{c}^{2} \ ~~~~~(<0 \text{ if } \alpha_c >0).\end{align*}
}
The unique rest point $\alpha^{*}_c =0$ is  the unique $S(1)$ equilibrium, and it is easy to see that it is globally asymptotically stable.

{Our next result shows that for $k \ge 2$ everyone defecting is no longer 
globally asymptotically stable (indeed, it is not even locally asymptotically stable) if $l<\frac{1}{k-1}$.}  

\begin{proposition} \label{prop:Sk-strict-PD}
{Let $k \ge 2$ and assume} that\footnote{The stability of defection in the borderline case of  $l=\frac{1}{k-1}$ depends on the tie-breaking rule, because observing a single $c$ in the $c$-sample and no $c$'s in the $d$-sample produces a tie between the two samples. If one assumes a uniform tie-breaking rule, then action $c$ wins with a probability of $\frac{k}{2} \epsilon - O(\epsilon^2)$, which is greater than $\epsilon$ if and only if $k>2$. Thus, with this rule, defection is stable if $k=2$ and unstable if $k>2$.} $l \neq \frac{1}{k-1}$.
{Defection} is $S(k)$ asymptotically stable if and only if $l > \frac {1}{k-1}$. 
\end{proposition}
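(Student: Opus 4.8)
The plan is to exploit the fact that the prisoner's dilemma has only two actions, so the dynamic \eqref{eq:BEP} reduces to the scalar equation $\dot\alpha_c = w_{c,k}(\alpha) - \alpha_c$ with $\alpha_d = 1-\alpha_c$. Writing $\epsilon \equiv \alpha_c$, asymptotic stability of defection (the state $\epsilon = 0$) is governed entirely by the sign of $f(\epsilon) \equiv w_{c,k}(\alpha) - \epsilon$ for small $\epsilon > 0$. First I would record the decision rule of a revising agent: if the $c$-sample contains $j$ cooperating opponents and the $d$-sample contains $i$ cooperating opponents, the mean payoffs are $\tfrac1k[j(1+l)-kl]$ and $\tfrac1k i(1+g)$, so the agent strictly prefers $c$ iff
\begin{equation*}
 j(1+l) - kl > i(1+g).
\end{equation*}
At $\epsilon = 0$ every opponent defects, so $j = i = 0$ and the $d$-sample (mean $0$) beats the $c$-sample (mean $-l$); hence $w_{c,k}(e_d) = 0$, confirming that defection is a rest point.

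The heart of the argument is a leading-order expansion of $w_{c,k}$ in $\epsilon$. Since each opponent cooperates independently with probability $\epsilon$, a configuration with a total of $t$ cooperating opponents across the $2k$ draws has probability of order $\epsilon^t$, so I would classify the $O(\epsilon)$ events, namely those with exactly one cooperating opponent in total. There are two: $(j,i)=(1,0)$ and $(j,i)=(0,1)$, each of probability $k\epsilon + O(\epsilon^2)$. The event $(0,1)$ yields $d$-mean $\tfrac{1+g}{k} > 0 > -l = c$-mean, so $c$ never wins there (regardless of $g$). The event $(1,0)$ makes $c$ win iff $(1+l)-kl = 1-(k-1)l > 0$, i.e.\ iff $l < \tfrac{1}{k-1}$; the excluded borderline $l = \tfrac1{k-1}$ is exactly where this is a tie, which is why it must be set aside. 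Note that $g$ plays no role in either decisive event, which explains why the threshold depends only on $l$.

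With this dichotomy the two cases follow immediately. If $l < \tfrac1{k-1}$, the event $(1,0)$ contributes, so $w_{c,k}(\alpha) = k\epsilon + O(\epsilon^2)$ and $f(\epsilon) = (k-1)\epsilon + O(\epsilon^2)$; since $k \ge 2$ the coefficient $k-1 \ge 1$ is strictly positive, so $f(\epsilon) > 0$ on a punctured right-neighborhood of $0$ and defection is unstable. If $l > \tfrac1{k-1}$, neither $O(\epsilon)$ event lets $c$ win, so every configuration in which $c$ is chosen has at least two cooperating opponents; hence $w_{c,k}(\alpha) = O(\epsilon^2)$ and $f(\epsilon) = -\epsilon + O(\epsilon^2) < 0$ for small $\epsilon > 0$, giving asymptotic stability. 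Because the system is one-dimensional, the sign of $f$ near $0$ determines both Lyapunov stability and local attraction, so no further linearization is needed.

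The step I expect to require the most care is the bookkeeping behind the expansion: rigorously justifying that the $O(\epsilon)$ behaviour of $w_{c,k}$ is captured exactly by the single-cooperator configurations, and that all remaining configurations --- in particular those with $j \ge 2$, which may or may not let $c$ win depending on $k$ and $l$ --- are uniformly $O(\epsilon^2)$. This is routine but must be done cleanly enough to pin down the sign of $f$ on a full one-sided neighborhood (not merely the sign of a formal derivative), since that is what the definition of asymptotic stability demands.
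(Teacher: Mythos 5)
Your proposal is correct and follows essentially the same route as the paper's own direct sketch: identify the two $O(\epsilon)$ sample configurations, observe that only $(j,i)=(1,0)$ can make $c$ win and that it does so iff $1-(k-1)l>0$, conclude $w_{c,k}=k\epsilon+O(\epsilon^2)$ or $O(\epsilon^2)$ accordingly, and read off stability from the sign of the scalar vector field near $0$. The extra care you flag (uniform $O(\epsilon^2)$ control of the multi-cooperator configurations and the one-dimensional sign argument) is exactly the bookkeeping the paper leaves implicit, so nothing is missing.
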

{Proposition \ref{prop:Sk-strict-PD} is implied by the results of \citet{sandholm2020stability}, and as we show in Section \ref{subsec-applications}, it also follows from Theorem \ref{thm:unstable sn} below.} For completeness, we provide a direct sketch of proof.
\begin{proof}[Sketch of Proof]
Consider a population state in which a small fraction $\epsilon$ of the agents cooperate and the remaining $1-\epsilon$ agents defect. A revising agent most likely sees all the opponents defecting,  both in the $c$-sample and in the $d$-sample. With a probability of approximately $k \epsilon$, the agent sees a single cooperation in the $c$-sample and no cooperation in the $d$-sample, and so $c$ yields a mean payoff of $\frac{1-(k-1)\cdot l}{k}$ and $d$ yields $0$. The former is higher iff $l<\frac{1}{k-1}$. Thus, if the last inequality holds, then the prevalence of cooperation gradually  increases, and the population drifts away from the state where everyone defects. By contrast, if $l>\frac{1}{k-1}$, then  cooperation yields
the higher mean  payoff only if the $c$-sample includes at least two cooperators, which happens with a negligible probability of order $\epsilon ^2$. Therefore, in this case, cooperation gradually dies out, and the population converges to the state where everyone defects.
\end{proof}

\subsection{{Stability of (Partial) Cooperation}}\label{sec:main analysis}
Next we show that for any $k\geq2$,
{if $g$ and $l$ are  sufficiently small,
then} the prisoner's dilemma game admits a  globally asymptotically stable $S(k)$ equilibrium in which the frequency of cooperation is between $28\%$ and $50\%$ and is increasing in $k$.

\begin{theorem}\label{thm:mainresult}
{For} $k\geq2$ and $g,l<\frac{1}{k-1},$ the 
unique $S(k)$ globally asymptotically stable equilibrium $\alpha^k$  {in the prisoner's dilemma game} satisfies $0.28<\alpha^k_c <0.5$. Moreover, 
$\alpha^{k'}_c<\alpha^{k}_c$ for all $2 \le k'<k$.
\end{theorem}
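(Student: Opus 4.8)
The plan is to exploit the two-action structure to reduce \eqref{eq:BEP} to a scalar ODE and then analyze its phase line. Writing $p=\alpha_c$ (so $\alpha_d=1-p$), the dynamic becomes $\dot p=f_k(p)$ with $f_k(p):=w_{c,k}(p)-p$, and the first task is to pin down $w_{c,k}$. Let $J$ and $I$ be the numbers of cooperating opponents met in the $c$-sample and the $d$-sample; these are independent $\mathrm{Bin}(k,p)$ variables, and a revising agent strictly prefers $c$ precisely when $(J-I)>(k-J)\,l+I\,g$. The decisive use of the hypothesis $g,l<\tfrac1{k-1}$ is that this reduces exactly to $J>I$: if $J>I$, then with $s:=J-I\ge1$ one bounds $(k-J)l+Ig<\frac{(k-J)+I}{k-1}=\frac{k-s}{k-1}\le s$, so $c$ wins; while if $J\le I$, then $(k-J)l+Ig>0\ge J-I$ (the two terms cannot vanish simultaneously), so $c$ loses. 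In particular no ties ever occur, the tie-breaking rule is irrelevant, and by the symmetry $\Pr[J>I]=\Pr[I>J]$,
\[ w_{c,k}(p)=\Pr[J>I]=\tfrac12\big(1-q_k(p)\big),\qquad q_k(p):=\Pr[J=I]. \]

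Next I would show that $f_k$ has a single interior zero with the right sign pattern, which by the standard one-dimensional argument gives global asymptotic stability. One checks immediately that $f_k(0)=0$, that $f_k'(0)=k-1>0$ (recovering the instability of defection from Proposition \ref{prop:Sk-strict-PD}), and that $f_k(1)=-1<0$. It therefore suffices to prove that $f_k$ is strictly concave on $(0,1)$, i.e.\ that the collision probability $q_k$ is strictly convex. The clean tool here is Fourier inversion: the characteristic function of $J-I$ is $|1-p+pe^{i\theta}|^{2k}=\big(1-2u(1-\cos\theta)\big)^k$ with $u:=p(1-p)$, so
\[ q_k(p)=\frac1{2\pi}\int_{-\pi}^{\pi}\big(1-2u(1-\cos\theta)\big)^k\,d\theta=:Q_k(u). \]
On $u\in(0,\tfrac14]$ the base is nonnegative, whence $Q_k'\le0$ and $Q_k''\ge0$; since $u(p)$ is concave with $u''=-2$, the chain rule gives $q_k''=Q_k''(u)\,(u')^2-2Q_k'(u)>0$, both summands being nonnegative and the second strictly positive. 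Strict concavity of $f_k$ together with $f_k(0)=0$, $f_k'(0)>0$, $f_k(1)<0$ forces a unique zero $p^\ast=\alpha^k_c$ in $(0,1)$, with $f_k>0$ on $(0,p^\ast)$ and $f_k<0$ on $(p^\ast,1)$; hence $p^\ast$ is the unique globally asymptotically stable equilibrium.

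The bounds are then quick. For the upper bound, $f_k(\tfrac12)=-\tfrac12q_k(\tfrac12)<0$, so $p^\ast<\tfrac12$. For the lower bound I would first handle $k=2$ explicitly, where $f_2(p)=-p\,P(p)$ with $P(p)=3p^3-6p^2+5p-1$; since $P(0.28)<0$ one gets $f_2(0.28)>0$ and hence $\alpha^2_c>0.28$. The general lower bound then follows from the $k$-monotonicity, which I would prove by showing $f_{k+1}(p^\ast_k)>0$, i.e.\ $w_{c,k+1}>w_{c,k}$, i.e.\ $q_{k+1}<q_k$ pointwise: the chance that two independent $\mathrm{Bin}(k,p)$'s coincide strictly decreases in $k$. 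Conditioning on the extra trial gives $q_{k+1}=\big(1-2p(1-p)\big)q_k+2p(1-p)\Pr[J_k-I_k=1]$, so $q_{k+1}-q_k=2p(1-p)\big(\Pr[D_k=1]-\Pr[D_k=0]\big)$ with $D_k:=J_k-I_k$; and writing $b_j:=\Pr[J_k=j]$ one has $\Pr[D_k=0]=\sum_j b_j^2>\sum_j b_jb_{j+1}=\Pr[D_k=1]$ by term-wise AM--GM (strict because the binomial pmf is nonconstant). Hence $q_{k+1}<q_k$, giving $\alpha^{k'}_c<\alpha^k_c$ for $k'<k$ and, combined with the $k=2$ estimate, $\alpha^k_c>0.28$ for every $k\ge2$.

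I expect the global-stability step to be the crux: establishing the strict concavity of $f_k$ --- equivalently the convexity of the collision probability $q_k$ --- uniformly in $k$ is the one place where a naive direct computation (differentiating the degree-$2k$ polynomial $q_k$ twice) is unwieldy, and the change of variables $u=p(1-p)$ in the Fourier representation is the decisive simplification that makes both required sign facts transparent. The remaining ingredients --- the reduction to $\Pr[J>I]$, the endpoint signs, and the two comparison arguments for the bounds and for monotonicity --- are comparatively routine.
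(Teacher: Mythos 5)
Your proof is correct and follows essentially the same route as the paper's: reduce to the scalar dynamic $\dot p=\Pr[J>I]-p$ under $g,l<\tfrac{1}{k-1}$, prove concavity of the right-hand side via Fourier inversion of the characteristic function of $J-I$, check the endpoint signs, get the upper bound $1/2$ from the positivity of the tie probability, verify the lower bound explicitly at $k=2$, $p=0.28$, and transfer it upward by monotonicity in $k$. The one place you genuinely diverge is the proof that the collision probability $q_k$ decreases in $k$: the paper reads this off pointwise from the characteristic function (the base $1-4p(1-p)\sin^2(t/2)$ lies strictly in $(0,1)$ for $p\in(0,1)$ and $t$ not a multiple of $\pi$, so $\varphi_k>\varphi_{k+1}$ and the integrals are ordered), whereas you condition on the $(k+1)$-st trial and use $\sum_j b_jb_{j+1}<\sum_j b_j^2$; your version is more elementary and purely probabilistic, but since the Fourier representation is needed anyway for convexity it does not shorten the overall argument. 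One small slip in the reduction step: your chain $(k-J)l+Ig<\frac{(k-J)+I}{k-1}\le s$ reads ``$0<0$'' in the degenerate case $J=k$, $I=0$, where the needed conclusion $(k-J)l+Ig=0<s=k$ is instead immediate; the paper phrases this case so that exactly one link of the chain can be an equality at a time.
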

{
The intuition for Theorem \ref{thm:mainresult} is as follows. The condition $g,l<\frac{1}{k-1}$ implies that cooperation yields a higher average payoff than defection iff the opponent has cooperated more times in the $d$-sample than in the $c$-sample. If $\alpha_c$ is close to zero, then the probability of this event is roughly $k \cdot \alpha_c>\alpha_c$. The symmetry between the two samples implies that the probability that the  opponent has cooperated more times in the $d$-sample is less than 0.5. Thus, there exists $\alpha^k_c<0.5$, which is not close to zero, for which the probability of cooperation yielding a higher average payoff is equal to $\alpha^k_c$. The  proof shows that this equality holds for $\alpha^k_c>0.28$, and that $\alpha_c$ is globally stable.
}
\begin{proof}
The proof of the theorem uses a number of claims, whose formal proofs are given in Appendix \ref{sec:proof-Thm1}. In what follows, we state each claim and present a sketch of proof. 

\noindent \textbf{Notation } For $j \leq k$, let $f_{k,p}(j)\equiv\binom{k}{j}p^j(1-p)^{k-j}$ 
be the probability mass function of a binomial random variable with parameters $k$ and $p.$ 
Let  $Tie(k,p)=\sum_{j=0}^k (f_{k,p}(j))^2$ be the probability of having a tie between two independent binomial random variables with parameters $k,p$, and let 
$Win(k,p)=0.5\cdot (1-Tie (k,p){)}$ be the probability that the first random variable has a larger value than the second. Let $p \equiv \alpha_c$ denote the proportion of cooperating agents in the population.

\begin{claim}\label{thm:BEPdynamic}\label{claim1}
Assume that $g,l\in (0,\frac{1}{k-1})$. The $k$-payoff sampling dynamic is given by
\begin{equation}\label{eq:BEPdyn}
    \dot{p} = Win(k,p)-p .
\end{equation}
\end{claim}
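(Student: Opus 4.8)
The plan is to translate the revision protocol into a comparison of two independent binomial counts and then show that, under the stated bounds on $g$ and $l$, this comparison collapses to simply asking which sample contained more cooperating opponents. Write $p\equiv\alpha_c$ and let $J_c$ and $J_d$ denote the number of opponents who cooperate in the $c$-sample and in the $d$-sample, respectively. Since each of the $2k$ sampled opponents is drawn independently and cooperates with probability $p$, the variables $J_c,J_d$ are independent, each distributed as a binomial with parameters $k$ and $p$. A cooperator faced in the $c$-sample scores $1$ and a defector scores $-l$, so the mean payoff of the $c$-sample is $\tfrac{1}{k}\bigl(J_c(1+l)-kl\bigr)$; in the $d$-sample a cooperator scores $1+g$ and a defector $0$, giving mean $\tfrac{1}{k}J_d(1+g)$. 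The revising agent picks $c$ precisely when the former strictly exceeds the latter, so $w_{c,k}(\alpha)=\Pr\bigl[J_c(1+l)-kl>J_d(1+g)\bigr]$, up to how ties are resolved.

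The heart of the argument is the claim that, for $g,l\in(0,\tfrac{1}{k-1})$, one has $J_c(1+l)-kl>J_d(1+g)$ if and only if $J_c>J_d$, and moreover that the two sample means are never exactly equal (so the tie-breaking rule is immaterial). I would prove this by examining the sign of $D\equiv J_c(1+l)-kl-J_d(1+g)$ in the three regimes $J_c>J_d$, $J_c=J_d$, and $J_c<J_d$. In the first regime, since $D$ is increasing in $J_c$, it suffices to check $J_c=J_d+1$; there $D=1-(k-1)l+J_d(l-g)$ is affine in $J_d\in\{0,\dots,k-1\}$ and takes the values $1-(k-1)l$ and $1-(k-1)g$ at the two endpoints, both strictly positive exactly because $l,g<\tfrac{1}{k-1}$. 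In the middle regime $J_c=J_d=j$ gives $D=-\bigl((k-j)l+jg\bigr)<0$, so $d$ strictly wins. The third regime is handled symmetrically (check $J_c=J_d-1$, the largest $J_c$ in that range), again yielding $D<0$ at both extreme values of $J_d$. Hence $c$ is chosen exactly on the event $\{J_c>J_d\}$.

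With this reduction in hand the conclusion is immediate: since $J_c$ and $J_d$ are i.i.d., $\Pr[J_c>J_d]=\tfrac12\bigl(1-\Pr[J_c=J_d]\bigr)$, and $\Pr[J_c=J_d]=\sum_{j=0}^k f_{k,p}(j)^2=Tie(k,p)$, so $w_{c,k}(\alpha)=Win(k,p)$; substituting into $\dot\alpha_c=w_{c,k}(\alpha)-\alpha_c$ gives $\dot p=Win(k,p)-p$. The main obstacle is the sign analysis of $D$: everything hinges on the boundary integer values, and it is precisely the hypotheses $(k-1)l<1$ and $(k-1)g<1$ that make the extreme cases ($J_c=J_d+1$ with $J_d=k-1$, and its mirror image) come out with the correct sign. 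The one mild subtlety is confirming the $J_c=J_d$ case yields a strict win for $d$, so that the two sample means are genuinely never tied and the claimed formula holds under any tie-breaking rule.
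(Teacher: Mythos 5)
Your proposal is correct and follows essentially the same route as the paper: both reduce the payoff comparison to the two independent binomial counts of cooperators in the $c$- and $d$-samples, show that under $g,l<\tfrac{1}{k-1}$ the $c$-sample yields the strictly higher payoff exactly when it contains more cooperators (the paper bounds $(k-j)l+j'g\le\tfrac{k-j+j'}{k-1}\le 1$ directly, while you check the sign of the affine difference at the extreme integer values --- the same elementary computation), and then identify $w_{c,k}$ with $Win(k,p)$. Your explicit verification that the payoffs are never tied is a nice touch that the paper leaves implicit.
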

\begin{proof}[Sketch of Proof]
The condition $g,l<\frac{1}{k-1}$ implies that action $c$ has a higher mean payoff iff the $c$-sample includes more cooperating opponents than the $d$-sample does. The number of cooperators in each sample has a binomial distribution with parameters $k$ and $p$, and so the probability of
$c$ having a higher mean payoff  is $Win(k,p)$ (which we substitute in (\ref{eq:BEP})).
\end{proof}

For $k \geq 2$ and $0\leq p \leq 1$, denote the expression on the right-hand side of Eq. \eqref{eq:BEPdyn} by
\begin{equation}
h_{k}(p)=\label{eq:hn}Win(k,p)-p.
\end{equation}
\begin{claim}\label{thm:claim1}\label{claim2}
For $k \geq 2$, the function $h_k$ satisfies $h_k(0)=0, h_k(1) = -1$ and $h_k'(0) > 0.$
\end{claim}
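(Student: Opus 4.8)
The plan is to read the two endpoint values directly off the definitions and then obtain the one-sided derivative $h_k'(0)$ from the leading-order behaviour of $Tie(k,p)$ near $p=0$. Nothing deep is involved; the statement is a direct computation, and the only place calling for a little care is the derivative at the all-defect state.

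First I would dispose of the endpoints. At $p=0$ the law $f_{k,0}$ is degenerate, placing all its mass on $j=0$ (that is, $f_{k,0}(0)=1$ and $f_{k,0}(j)=0$ for $j\ge 1$), and likewise $f_{k,1}$ places all its mass on $j=k$. Hence in either case two independent copies of the corresponding binomial always coincide, so $Tie(k,0)=Tie(k,1)=1$ and therefore $Win(k,0)=Win(k,1)=0$. Substituting into $h_k(p)=Win(k,p)-p$ then yields $h_k(0)=0$ and $h_k(1)=-1$, as claimed.

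For the derivative I would write $h_k(p)=\tfrac12\bigl(1-Tie(k,p)\bigr)-p$, so that $h_k'(p)=-\tfrac12\,Tie'(k,p)-1$; it therefore suffices to show $Tie'(k,0)=-2k$, which gives $h_k'(0)=k-1>0$ for $k\ge 2$. Expanding the definition,
\begin{equation*}
Tie(k,p)=\sum_{j=0}^{k}\bigl(f_{k,p}(j)\bigr)^2=\sum_{j=0}^{k}\binom{k}{j}^{2}p^{2j}(1-p)^{2(k-j)} .
\end{equation*}
For every $j\ge 1$ the exponent $2j\ge 2$ makes the corresponding summand $O(p^{2})$, so it contributes nothing to the first derivative at $p=0$; only the $j=0$ term $(1-p)^{2k}$ survives, and its derivative at $p=0$ equals $-2k$. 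Hence $Tie'(k,0)=-2k$ and $h_k'(0)=-\tfrac12(-2k)-1=k-1$, strictly positive for $k\ge 2$.

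The only point requiring attention is this last step, namely verifying that all terms of $Tie(k,p)$ other than the $j=0$ term vanish to second order and so drop out of $Tie'(k,0)$. Probabilistically this is exactly the $k\epsilon$ heuristic already invoked in the sketch of Proposition \ref{prop:Sk-strict-PD}: near the all-defect state the leading-order way for the $c$-sample to beat the $d$-sample is for the $c$-sample to contain a single cooperating opponent and the $d$-sample none, an event of probability $\approx kp$, so $Win(k,p)=kp+O(p^{2})$ and $h_k'(0)=k-1$.
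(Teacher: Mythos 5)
Your proposal is correct and follows essentially the same route as the paper: both handle the endpoints via the degeneracy of the binomial at $p\in\{0,1\}$, and both obtain $h_k'(0)=k-1$ by observing that in $Tie(k,p)=\sum_j (f_{k,p}(j))^2$ only the $j=0$ term $(1-p)^{2k}$ contributes to first order at $p=0$, all other summands being $O(p^2)$. Your phrasing via $Tie'(k,0)=-2k$ is just a repackaging of the paper's expansion $h_k(p)=(k-1)p+O(p^2)$.
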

\begin{proof}[Sketch of Proof]
When $p=0$ 
(resp., $=1$), in both samples all the opponents are defectors (resp.,  cooperators). The conclusion implies that $Win(k,p)=0$ for $p \in \{0,1\}$, which, in turn, implies that $h_k(0) = 0$ and $h_k(1) = -1$. Next, observe that for $p=\epsilon <<1$, $Win(k,p)\approx k\epsilon$, which is approximately the probability of having at least one cooperator in the $c$-sample. Thus, $h_k(\epsilon) \approx k\epsilon-\epsilon$, which implies that $h'_k(0)=k-1>0$.

\end{proof}
\begin{claim}\label{thm:claim2}\label{claim3}
For $k \geq 2$, the expression $h_k(p)$ is concave in $p$, and satisfies $h_k(p) < h_{k+1}(p)$ for $p \in (0,1)$, $h_k\left(\frac{1}{2}\right) < 0$, and $\lim_{k \rightarrow \infty}h_k\left(\frac{1}{2}\right) = 0.$
\end{claim}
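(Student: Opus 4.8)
The plan is to reduce all four assertions to a single integral representation of $Tie(k,p)$ and read each one off. Writing $X,Y$ for independent binomial variables with parameters $(k,p)$, we have $Tie(k,p)=\Pr(X=Y)=\sum_n \Pr(X=n)^2$, which by Parseval's identity applied to the characteristic function $\phi(t)=(1-p+pe^{it})^k$ equals $\frac{1}{2\pi}\int_{-\pi}^{\pi}|\phi(t)|^2\,dt$. Since $|1-p+pe^{it}|^2=1-2p(1-p)(1-\cos t)$, I would set $\psi(p,t)\equiv 1-2p(1-p)(1-\cos t)$ and record the representation $Tie(k,p)=\frac{1}{2\pi}\int_{-\pi}^{\pi}\psi(p,t)^k\,dt$. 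Because $p(1-p)\le\frac14$ and $1-\cos t\le 2$, the base satisfies $0\le\psi(p,t)\le 1$, with $\psi(p,t)<1$ for $t\ne 0$ whenever $p\in(0,1)$. Throughout I use $h_k(p)=Win(k,p)-p=\tfrac12-\tfrac12 Tie(k,p)-p$.

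Concavity is the substantive step, and the representation makes it routine. Differentiating twice under the integral sign (justified since $\psi^k$ is smooth in $p$ on the compact $t$-interval) gives, pointwise in $t$, $\partial_p^2(\psi^k)=k(k-1)\psi^{k-2}(\partial_p\psi)^2+k\psi^{k-1}\,\partial_p^2\psi$. Here $\partial_p^2\psi=4(1-\cos t)\ge 0$ and $\psi\ge 0$, so both summands are nonnegative for $k\ge 2$; hence $p\mapsto Tie(k,p)$ is convex and $h_k=\tfrac12-\tfrac12 Tie-p$ is concave. I expect this to be the main obstacle, essentially in locating the right representation: a direct attack, for instance expanding $Tie(k,p)$ as a polynomial in $p(1-p)$ with sign-alternating coefficients, is awkward precisely because the $p$-derivatives alternate in sign, whereas the integrand $\psi^k$ is manifestly a nonnegative power of a function that is itself convex in $p$.

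The remaining three assertions then follow immediately. For monotonicity in $k$, fix $p\in(0,1)$; the integrand $\psi(p,t)^k$ is strictly decreasing in $k$ for every $t\ne 0$ (since $0\le\psi<1$ there) and unchanged on the null set $\{t=0\}$, so $Tie(k+1,p)<Tie(k,p)$, whence $h_{k+1}(p)-h_k(p)=\tfrac12\bigl(Tie(k,p)-Tie(k+1,p)\bigr)>0$. For the sign at $p=\tfrac12$, observe that $h_k(\tfrac12)=Win(k,\tfrac12)-\tfrac12=-\tfrac12 Tie(k,\tfrac12)$, which is strictly negative because $Tie(k,\tfrac12)\ge \Pr(X=0)\Pr(Y=0)>0$. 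Finally, at $p=\tfrac12$ the integrand equals $\bigl(\tfrac{1+\cos t}{2}\bigr)^k$, which tends to $0$ for every $t\ne 0$ and is dominated by $1$, so dominated convergence yields $\lim_{k\to\infty}Tie(k,\tfrac12)=0$ and hence $\lim_{k\to\infty}h_k(\tfrac12)=0$.
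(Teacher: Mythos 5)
Your proof is correct and follows essentially the same route as the paper: both reduce everything to the integral representation $Tie(k,p)=\frac{1}{2\pi}\int_{-\pi}^{\pi}\bigl(1-2p(1-p)(1-\cos t)\bigr)^k\,dt$ and read off convexity in $p$, strict monotonicity in $k$, and the limit from the pointwise behavior of the integrand. The only cosmetic differences are that you obtain the representation via Parseval applied to the binomial characteristic function rather than via Fourier inversion of the characteristic function of $X-Y$, and you verify convexity by explicit second differentiation rather than by the ``convex increasing power of a convex nonnegative base'' composition argument.
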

\begin{proof}[Sketch of Proof]
 Observe that $Tie(k,p)$ is close to 1 when $p$ is close to either zero or one, and is smaller for intermediate $p$'s. The formal proof shows (by analyzing the characteristic function) that $Tie(k,p)$ is (1) convex in $p$, (2) decreasing in $k$ (i.e., the larger the number of actions in each sample, the smaller the probability of having exactly the same number of cooperators in both samples), and (3) converges to zero as $k$ tends to $\infty$. These findings imply that
  $h_k(p)=(0.5 \cdot (1-Tie(k,p))-p$ is concave in $p$ and increasing in $k$, and that 
  \[h_k\left(\frac{1}{2}\right) = \frac{1}{2}\cdot \left(1-Tie\left(k,\frac{1}{2}\right)\right)-\frac{1}{2} < \frac{1}{2}-\frac{1}{2} =0, \text{ ~~and}\] 
\[\lim_{k\rightarrow\infty}h_{k}\left(\frac{1}{2}\right)=\lim_{k\rightarrow\infty}\left(\frac{1}{2}\cdot \left(1-Tie\left(k,\frac{1}{2}\right)\right)-\frac{1}{2}\right)=\left(\frac{1}{2}-0\right)-\frac{1}{2}=0.\] 
\end{proof}

It follows from Claims \ref {claim2} and \ref{claim3} that for $k\geq2$ the equation $h_k(p)=0$ has a unique solution in the interval $(0,1)$, that this solution $p(k)$ corresponds to an $S(k)$ globally asymptotically stable state, that it satisfies $p(k)<0.5$, and that it is increasing in $k$. 

To complete the proof of \cref{thm:mainresult}, it remains to show that $p(2)>0.28$. This inequality is an immediate corollary of the  fact that for $p=0.28$,
\[h_2(p=0.28) = 2p(1-p)^3 + p^2(1-p^2) - p  \approx 0.001 > 0. ~~~~~\qedhere \]

 \end{proof}
Figure \ref{figure1} shows the $S(k)$ payoff sampling dynamics 
and the $S(k)$ globally stable equilibria for various values of $k$.

\begin{figure}[h]
\centering
\includegraphics[scale=0.8]{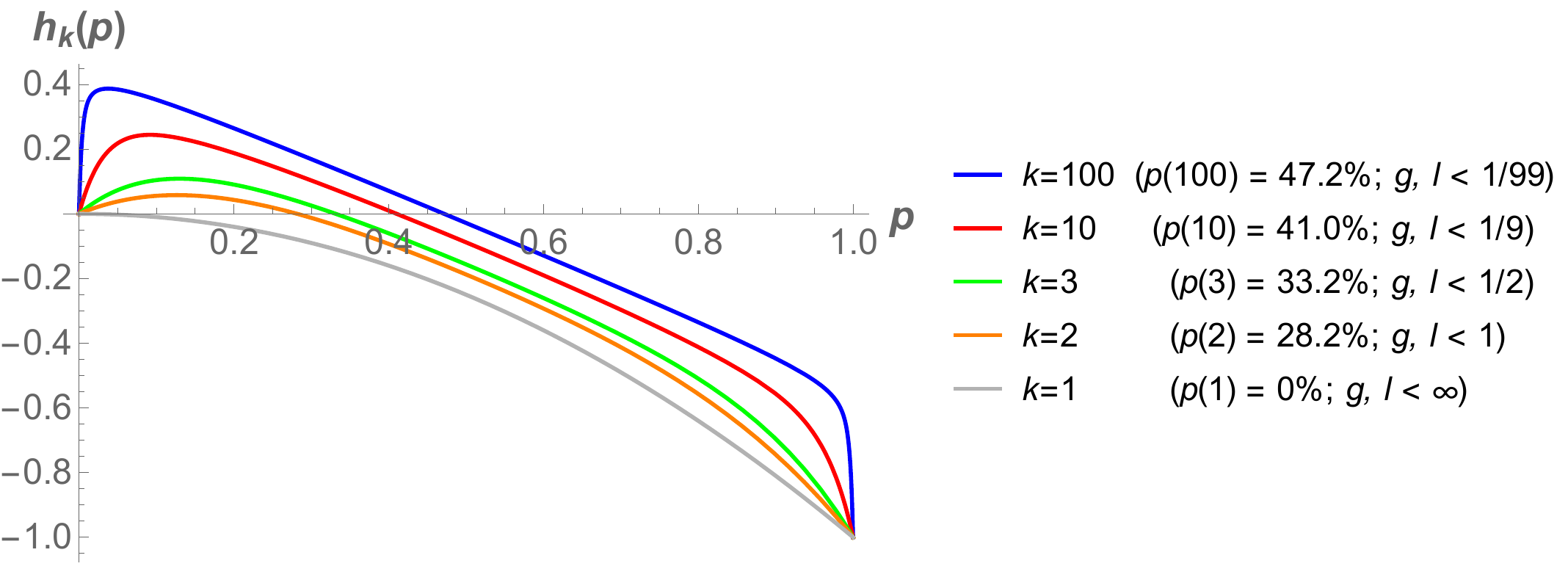}
\caption{The function $h_k$  
and its zero $p(k)$ for various values of $k$.}
\label{fig:BEP_PD}\label{figure1}
\end{figure}

\begin{remark}
{Theorem \ref{thm:mainresult} shows that partial cooperation is globally stable for any fixed $k$ if the parameters $g$ and $l$ are sufficiently small (with the upper bound depending on $k$). At the same time, it is well known that, for fixed $g$ and $l$, defection is globally stable if $k$ is sufficiently large (\citealp[Prop. 4.3]{sandholm2020stability}; see also   \citealp[Prop. 4]{osborne1998games}).}
\end{remark}
{
Theorem \ref{thm:mainresult} leaves open the question of the stability of partial cooperation when either $g$ or $l$ is larger than $\frac{1}{k-1}$. The next proposition answers this question for small sample sizes of $k=2$ or $3$, leaving full characterization of larger sample sizes for future research. As the proposition states, everyone defecting is globally asymptotically stable if $l>\frac{1}{k-1}$, while  if the reverse inequality holds, the globally asymptotically stable state has a substantial rate of cooperation of between $24\%$ and $33\%$.} 

\begin{proposition}\label{pro:k-2-3}
{For $k \in \{2,3\},$ the unique $S(k)$ globally asymptotically stable equilibrium $\alpha^k$ satisfies $\alpha^k_c=0$ if $l>\frac{1}{k-1}$, and $\alpha^k_c\in (0.24,0.33)$ if  $l<\frac{1}{k-1}$.}
\end{proposition}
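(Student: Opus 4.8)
The plan is to collapse the two-parameter problem onto a scalar dynamic $\dot p = w_{c,k}(p)-p$ on $[0,1]$ and then analyze a handful of explicit low-degree polynomials. The first step is to replace the clean rule behind \cref{claim1} (which needed $g,l<\tfrac1{k-1}$) by the exact comparison valid for all $g,l$: if the $c$-sample contains $i$ cooperating opponents and the $d$-sample contains $j$, then $c$ yields the strictly higher mean payoff iff $i(1+l)-kl>j(1+g)$. Tabulating this linear inequality over all $(i,j)\in\{0,\dots,k\}^2$ for $k\in\{2,3\}$ shows that the set $\mathrm{Win}$ of winning pairs is constant on each cell cut out by finitely many lines in $(g,l)$-space (for $k=3$ the relevant thresholds are $l=\tfrac12$, $g=\tfrac12$, $g=1-l$ and $g=2$). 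On each cell $w_{c,k}(p)=\sum_{(i,j)\in\mathrm{Win}}a_i(p)a_j(p)$ with $a_i(p)=\binom{k}{i}p^i(1-p)^{k-i}$ is a fixed polynomial, and since ties occur only on the cell boundaries I would check that any tie-breaking rule leaves the location of the rest point inside the claimed interval.

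A comparative-statics observation then tames the enumeration: because $g$ enters the comparison only through the term $j(1+g)$, raising $g$ can only delete pairs with $j\ge1$ from $\mathrm{Win}$, so $w_{c,k}(\cdot\,;g)$ is pointwise nonincreasing in $g$; the same holds for $l$. Hence the stable cooperation level is monotone in $g$, and it suffices to control the two extreme $g$-regimes within each sign-regime of $l-\tfrac1{k-1}$.

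For defection ($l>\tfrac1{k-1}$) the key fact is that a single cooperator in the $c$-sample cannot win: $i(1+l)-kl\le1-(k-1)l<0$ when $i=1$, so every winning pair has $i\ge2$ and $w_{c,k}(p)=O(p^2)$ near $0$, giving local stability and recovering \cref{prop:Sk-strict-PD}. For global stability I would take the largest winning set consistent with $l>\tfrac1{k-1}$ (attained as $g\to0$ and $l\downarrow\tfrac1{k-1}$), write its revision probability as $w_{c,k}(p)=p\,\chi(p)$, and verify the degree-bounded inequality $\chi(p)<1$ on $(0,1]$; since every admissible $w_{c,k}$ is dominated by this extreme one, $\dot p<0$ throughout $(0,1]$ and $p=0$ is globally stable for every $g$.

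For cooperation ($l<\tfrac1{k-1}$) the pair $(1,0)$ wins, so $w_{c,k}(p)\approx kp>p$ near $0$ and $p=0$ repels. In each cell I would factor $\dot p=p\,\psi(p)$ and prove that $\psi$ is strictly decreasing on $(0,1)$ with $\psi(0)>0>\psi(1)$, which gives a unique interior rest point that is globally asymptotically stable; the monotonicity step then locates it between the small-$g$ cell, where $\mathrm{Win}=\{i>j\}$ and the dynamic is exactly $\dot p=Win(k,p)-p$ of \cref{thm:mainresult} (supplying the upper bound), and the large-$g$ cell, where only the $j=0$ pairs survive and $w_{c,k}(p)=\bigl(1-(1-p)^k\bigr)(1-p)^k$ (supplying the lower bound by checking the sign of $w_{c,k}(p)-p$ at $p=0.24$). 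I expect the crux to be these global-stability verifications: establishing the strict monotonicity of $\psi$ for the intermediate $g$-cells that appear when $k=3$ (where the binomial-symmetry argument of \cref{claim3} is unavailable) and pinning the numeric endpoints tightly—the upper endpoint for $k=3$ is the delicate one, since the small-$g$ root lies extremely close to $0.33$.
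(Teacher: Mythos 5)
Your route is essentially the paper's: both collapse the problem to the scalar dynamic $\dot p = w_{c,k}(p)-p$, partition the $(g,l)$-plane into cells on which the set of winning pairs $(i,j)$ is constant, and analyze one explicit low-degree polynomial per cell. The genuine addition is your monotonicity observation---enlarging $g$ or $l$ only shrinks the winning set, so $w_{c,k}$ is pointwise nonincreasing in each parameter---which lets you sandwich every cell between two extremes. The paper instead grinds through every cell (three beyond Theorem~\ref{thm:mainresult} for $k=2$, ten for $k=3$), computes each rest point ($0.245$, $0.250$, $0.323$, \dots), and asserts that global stability is ``straightforward to verify''; so your deferral of the $\psi$-monotonicity checks is at parity with the published argument, and your sandwich genuinely reduces the number of polynomials one must touch. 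Two minor points: the line $l=2$ also cuts cells for $k=3$ (the pair $(2,0)$ wins iff $l<2$), though your domination argument makes that irrelevant; and on the boundary lines in $g$ (e.g.\ $g=1$ for $k=2$, which the proposition does not exclude) your tie-breaking remark indeed needs the sandwich to conclude---a case the paper's enumeration silently skips.

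The one step that would actually fail is the upper-bound check you yourself flag as delicate. The binding cell is $g,l<\frac12$ with $k=3$, where the rest point is the root of $h_3(p)=Win(3,p)-p$. A direct computation gives $h_3(0.33)=\frac12\left(1-Tie(3,0.33)\right)-0.33\approx 0.0014>0$, while $h_3\left(\frac13\right)=\frac{242}{729}-\frac{243}{729}=-\frac{1}{729}<0$; by concavity of $h_3$ the root therefore lies strictly between $0.33$ and $\frac13$ (numerically $\approx 0.3317$). Hence the claimed bound $\alpha^3_c<0.33$ cannot be established: the tight bound deliverable by your method (and the paper's) for that cell is $\alpha^3_c<\frac13$. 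This is an inaccuracy in the proposition as stated rather than a defect of your approach---the paper's own proof obscures it by deferring the cell $g,l<\frac12$ to Theorem~\ref{thm:mainresult}, which only supplies the upper bound $0.5$. Apart from exposing this, your plan, once the finitely many polynomial verifications are carried out, proves the proposition with $0.33$ replaced by $\frac13$.
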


The proof of the proposition is presented in Appendix \ref{sub-k-2-3}.

\section{{General Symmetric Games}}\label{sec:symmetric games}
 In this section we extend Proposition \ref{prop:Sk-strict-PD} by presenting a necessary and sufficient condition for $S(k)$ asymptotic stability of actions in general symmetric games. {Our characterization uses the following two definitions.}
 \subsection{Definitions}\label{subsec-defs}
\begin{definition}\label{def:generic}
A symmetric {$n$-player} game with payoff function $u : A^n \rightarrow \mathbb{R}$ is \textit{generic} if for any two sequences of action profiles $\left(\left(a_j^1, a_j^2,\dots,a_j^n\right)\right)_{j=1}^L$ and $\left(\left(\tilde{a}_j^1, \tilde{a}_j^2,\dots,\tilde{a}_j^n\right)\right)_{j=1}^L$ of equal length {$L$}, the equality
\[
\sum_{j=1}^L u\left(a_j^1, a_j^2,\dots,a_j^n\right) = \sum_{j=1}^L u\left(\tilde{a}_j^1, \tilde{a}_j^2,\dots,\tilde{a}_j^n\right)
\]
implies $\left\{a_j^1\right\}_{j=1}^L = \left\{\tilde{a}_j^1\right\}_{j=1}^L$. \end{definition}
 {Thus, a symmetric game is generic if the sums of payoffs of two sequences of action profiles are equal only if every action that appears as the first action in one of the profiles in one sequence also appears as the first action in the other sequence. Note that this definition of genericity is rather weak. A stronger definition could be used instead; the use of a weak  definition only strengthens our results below. Observe also that if each entry in the payoff matrix is independently drawn from a continuous (atomless) distribution, then the resulting random symmetric game is generic with probability one. Clearly, in a generic game, every pure Nash equilibrium is a strict equilibrium.}

\begin{definition}\label{def:support}
\emph{For an action $a^\ast$ in a  symmetric {$n$-player} game, and for $a,a' \in A\backslash\left\{ a^{*}\right\}$:}
\begin{enumerate}
\item 
\emph{action} $a$ directly $S(k)$ supports $a'$ against $a^\ast$ \emph{if}
{$$u(a', a,a^\ast,\dots,a^\ast) + (k-1)\cdot u(a', a^\ast,\dots,a^\ast) > k\cdot  u(a^\ast,\dots,a^\ast); \emph{ and}$$}
\item
\emph{action} $a$ $S(k)$ supports $a'$ by spoiling $a^\ast$ \emph{if}
$$ k\cdot u(a', a^\ast,\dots,a^\ast) > u(a^\ast, a,a^\ast,\dots,a^\ast)+(k-1) \cdot  u(a^\ast, \dots,a^\ast) $$
$$\text{~~\textrm{\emph{and}}~~}u(a',a^\ast,\dots,a^\ast)>u(b,a^\ast,\dots,a^\ast)~~\forall b\notin \{a^\ast,a'\}. $$
\end{enumerate}
{\emph{Action $a$ $S(k)$ \emph{single supports}, \emph{double supports,} or just \emph{supports} action $a'$ against $a^\ast$ if exactly one of conditions 1 and 2, both conditions, or at least one condition, respectively, holds. The notion of \emph{weak $S(k)$ support} and the related terms 
(weak direct support, weak support by spoiling, weak single support, and weak double support) are defined similarly, except that the strict inequalities in 1 and 2 are replaced by weak inequalities.}}
\end{definition}

{Less formally, action $a$ supports action $a'$ against action $a^\ast$ if a single appearance of $a$ in a population in which almost  everyone plays $a^\ast$ can make the mean payoff in the $a'$-sample the highest one.} For an action $a$ that directly supports $a'$ (a supporter of $a'$, {in the terminology of} \citeauthor{sandholm2020stability}, \citeyear[p.12]{sandholm2020stability}), a single appearance of $a$ in the $a'$-sample (with all other actions being $a^\ast$) is sufficient to make the mean payoff larger than that yielded by $a^\ast$, and thus to make it the largest payoff. For $a$ that supports $a'$ by spoiling $a^\ast$ (a benefiting spoiler of $a^\ast$, in that terminology), a single appearance of $a$ in the $a^\ast$-sample is sufficient to make the mean payoff smaller than that yielded by $a'$. This makes the latter the largest mean payoff {if} $a'$ is the second-best reply to $a^\ast$ 
(while if another action $a''$ is second best, then $a''$ yields a higher payoff, assuming that the $a''$-sample includes only $a^\ast$'s). 
Note that, in a generic game, the second-best reply is unique: the set $\argmax_{b\neq a^{*}}u(b,a^\ast,\dots,a^\ast)$ is a singleton. 

{Also, in a generic game, the notions of support and weak support coincide: action $a$ $S(k)$ supports $a'$ against $a^\ast$ if and only if it weakly $S(k)$ supports $a'$ against $a^\ast$.}

Observe that if $a^\ast$ is a symmetric equilibrium action, then $S(k)$ support against it is ``easier" the smaller $k$ is: {if action $a$ (weakly) $S(k)$ supports $a'$ against 
$a^\ast$, then it (respectively, weakly) $S(k')$ supports $a'$ against $a^\ast$ for all $k'<k$. }

\subsection{Result}
{It is well known (\citealp{osborne1998games}) that an action $a^\ast$ can be an $S(k)$ equilibrium only if $a^\ast$ is a symmetric Nash equilibrium (otherwise, $w_{a^\ast,k}(e_{a^\ast})=0$, which contradicts $e_{a^\ast}$ being a rest point).
Moreover, if the tie-breaking rule assigns positive probability to all co-winning actions, then $a^\ast$ must be a strict equilibrium (otherwise, $w_{a^\ast,k}(e_{a^\ast})<1$, which again contradicts $e_{a^\ast}$ being a rest point). 
Thus, being a strict symmetric equilibrium is essentially a necessary condition for an action to be $S(k)$ asymptotically stable. Our next result characterizes the conditions for a strict symmetric equilibrium action to be $S(k)$ asymptotically stable when
$k \geq 2$ or $n \geq 3$}.

\begin{theorem}\label{thm:unstable sn}
{Suppose that $k \geq 2$ or $n \geq 3$. A} necessary and sufficient condition for a strict symmetric equilibrium action $a^\ast$ in a generic symmetric {$n$-player} game to be $S(k)$ asymptotically stable is that, for the set $A^\ast \equiv A \backslash \{a^\ast\}$,
\begin{enumerate}
    \item
    [I.]
    {every nonempty subset $A' \subseteq A^\ast$ includes an action $a'$} that is not $S(k)$ supported  
    {against $a^\ast$} by any action in $A'$.
    \end{enumerate}
{
In a non-generic game, condition I is still  necessary  for $S(k)$ asymptotic stability, and a sufficient condition is}  
\begin{enumerate}
    \item
    [II.]
     every nonempty subset $A' \subseteq A^\ast$ includes an action $a'$ that is not  \em{weakly}
    \em{$S(k)$ supported against $a^\ast$ by any action in $A'$}.
    \end{enumerate}
{In conditions I and II, ``is not S(k) supported by'' and ``is not weakly S(k) supported by'' can be replaced by ``does not S(k) support'' and ``does not weakly S(k) support'', respectively, as the conditions resulting from these replacements, I' and II', are equivalent to I and II.}
\end{theorem}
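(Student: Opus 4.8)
The plan is to reduce the statement to a spectral analysis of the linearization of the dynamic $\dot\alpha = w_k(\alpha) - \alpha$ at the vertex $e_{a^\ast}$, and then to translate the resulting spectral condition into the purely combinatorial condition I (equivalently I$'$, II, II$'$). First I would observe that each $w_{a',k}$ is a polynomial in $\alpha$: it equals a sum, over the finitely many possible sampling configurations, of the probability of that configuration (a monomial in the entries of $\alpha$) times the tie-breaking weight assigned to $a'$ in that configuration (a constant). Hence $w_k$ is $C^\infty$ and the dynamic may be analyzed through its Jacobian. Using the coordinates $\epsilon_a = \alpha_a$ for $a \in A^\ast$ (with $\alpha_{a^\ast} = 1 - \sum_{a\in A^\ast}\epsilon_a$ determined) to parametrize $\Delta$ near $e_{a^\ast}$, the linearization on the tangent space is $\dot\epsilon = (M-I)\epsilon$, where $M = (M_{a'a})_{a',a\in A^\ast}$ is the matrix of first-order coefficients of the $w_{a',k}$.

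Next I would identify $M$. To first order in $\epsilon$ only single-deviation configurations — exactly one of the sampled opponents plays some $a\neq a^\ast$, all others play $a^\ast$ — contribute, since two deviations have probability $O(\epsilon^2)$. Enumerating these: a lone $a$ in the $a'$-sample makes $a'$ the chosen action exactly when $a$ directly $S(k)$ supports $a'$; a lone $a$ in the $a^\ast$-sample makes $a'$ chosen exactly when $a$ supports $a'$ by spoiling (this uses that, after $a^\ast$ is spoiled, the winner is the second-best reply to $a^\ast$, so $a'$ must be that reply); a lone deviation in any other sample leaves $a^\ast$ the winner. A deviation in any of the $k(n-1)$ opponent slots of the relevant sample triggers the event, by permutation-invariance of $u$, so $M_{a'a}$ is a nonnegative integer multiple of $k(n-1)$. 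Consequently $M_{a'a} > 0$ iff $a$ weakly supports $a'$, and when $a$ strictly supports $a'$ the single deviation makes $a'$ the unique winner, so $M_{a'a} \geq k(n-1) \geq 2$; here the hypothesis $k \geq 2$ or $n \geq 3$ is exactly what guarantees $k(n-1) \geq 2$.

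The heart of the argument is then the spectral/combinatorial step. Since $M$ is nonnegative, Perron--Frobenius gives that $\rho(M)$ is an eigenvalue, so $M - I$ is a stable matrix iff $\rho(M) < 1$, and asymptotic stability of $e_{a^\ast}$ follows from hyperbolicity of the linearization (the eigenvalues are $-1$ in the stable case and include $\rho(M)-1 \geq 1$ in the unstable case) together with smoothness of $w_k$, via the standard Lyapunov and unstable-manifold theorems on the tangent space. I would form the digraph $D$ on $A^\ast$ with an arc $a \to a'$ whenever $a$ supports $a'$ against $a^\ast$; the standard characterization of acyclicity (``every nonempty induced subgraph has a source'', equivalently ``a sink'') shows that $D$ is acyclic iff condition I holds iff condition I$'$ holds. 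If $D$ is acyclic, a topological order makes $M$ strictly triangular, hence nilpotent, so $\rho(M)=0<1$ and $e_{a^\ast}$ is asymptotically stable. If $D$ contains a cycle $a_1\to\cdots\to a_r\to a_1$, each of its arc-weights is $\geq 2$, so every diagonal entry of $B^r$ is $\geq 2^r$ for the principal submatrix $B$ of $M$ on the cycle; comparing $\tr(B^r)\geq r2^r$ with $|\tr(B^r)| = |\sum_i\lambda_i^r| \leq r\,\rho(B)^r$ yields $\rho(M)\geq\rho(B)\geq 2>1$, so $e_{a^\ast}$ is unstable. For generic games this establishes the equivalence of I and stability. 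For non-generic games, the same cycle argument with strict support (which robustly forces weight $\geq 2$) shows that I is necessary, while if condition II holds the weak-support digraph — which contains the nonzero pattern of $M$ — is acyclic, so $M$ is again nilpotent and $e_{a^\ast}$ is stable; the replacements I$'$, II$'$ follow from the source/sink duality.

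The step I expect to be most delicate is the justification that the first-order linearization genuinely governs asymptotic stability at this boundary vertex: one must confirm that $w_k$ extends smoothly across $e_{a^\ast}$ (it does, being polynomial), that the relevant Jacobian is the tangent-space map $M-I$, and that the eigenvalue structure ($\{-1\}$, or containing a real eigenvalue $\geq 1$) is hyperbolic so that Lyapunov's theorem and the instability-from-linearization theorem apply verbatim. The secondary obstacles are the careful bookkeeping of tie-breaking contributions in the non-generic case (ensuring $M_{a'a}>0 \Rightarrow$ weak support and strict support $\Rightarrow M_{a'a}\geq 2$) and the spectral lower bound $\rho(M)\geq 2$ extracted from a cycle of heavy arcs.
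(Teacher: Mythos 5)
Your proposal is correct and follows essentially the same route as the paper: linearize the dynamic at $e_{a^\ast}$, identify the Jacobian as $k(n-1)T-I$ for a nonnegative ``support'' matrix $T$, and translate conditions I, I$'$, II, II$'$ into the dichotomy that the spectral radius of the relevant matrix is $0$ versus at least $1$ (so that $k(n-1)\rho-1$ is a positive eigenvalue precisely because $k(n-1)\ge 2$), with the same treatment of ties in the non-generic case. The only cosmetic difference is that you prove the combinatorial step via acyclicity of the support digraph, topological ordering, and a trace bound along a cycle, whereas the paper uses Perron--Frobenius facts ($\rho$ is an eigenvalue with a nonnegative eigenvector, and $Mv\ge v\Rightarrow\rho\ge 1$); the paper's nonnegative eigenvector also certifies that the unstable direction points into the simplex, a one-line remark worth adding to your instability argument at this boundary vertex.
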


\begin {proof}[Sketch of Proof]
Suppose that there is a nonempty subset of actions $A' \subseteq  A^\ast$, with cardinality 
$m\ge 1$, such that each action  $a' \in A'$ is supported by some action in $A'$. Consider an initial  population state $\alpha$ in which a fraction $1-\epsilon$ of the agents play $a^\ast$ and $\frac{\epsilon}{m}$ play each of the actions in $A'$. Since each $a' \in A'$ is supported by some action {$a$} in $A'$, {there is} a probability of approximately {$k(n-1) \frac{\epsilon}{m}$} of having action {$a$} appear in the sample and thus making the mean payoff yielded by $a'$ the highest one. It follows that {$w_{a',k}(\alpha) = k(n-1) \frac{\epsilon}{m} > \frac{\epsilon}{m} = \alpha_{a'}$} for all $a' \in A'$. Thus the frequency of all actions in $A'$ increases, 
{which implies that $a^\ast$ is not asymptotically stable}. The formal proof (Appendix \ref{sec:B3}) formalizes this intuition, by examining the Jacobian matrix at $e_{a^\ast}$ and showing that it admits an eigenvalue greater than 1.

Next, suppose 
that every nonempty subset $A' \subseteq A^\ast$ includes an action $a'$ that is not 
{weakly} $S(k)$ supported by any action in $A'$. Consider a state in which $1- \epsilon$ of the agents play $a^\ast$. We know that there exists an action $a'$ that is not {weakly} $S(k)$ supported by any action in $A^\ast$. This implies that the probability of action $a'$ having the maximal mean payoff in an agent's sample is $O(\epsilon ^2)$, and, thus, $w_{a', k}(\alpha)=O(\epsilon ^2)$. As the frequency of $a'$ becomes negligible, we can iterate the argument for $A' = A^\ast\backslash \{a'\}$, and find another action $a''$ for which  $w_{a'', k}(\alpha)=O(\epsilon ^2)$, etc. The formal proof (Appendix \ref{sec:B3}) shows that (a) condition II implies that all the eigenvalues of the Jacobian matrix are negative, and (b) the phrase ``is not {weakly} $S(k)$ supported by''  can be replaced by ``does not {weakly} $S(k)$ support.'' 
\end{proof}
{We remark that condition II is sufficient for} asymptotic stability also when $k=1$ {and $n=2$}. However, {condition I is not necessary in this case, as} demonstrated by defection in the prisoner's dilemma.

{
The following corollary shows that any  strict symmetric equilibrium is characterized by a threshold $k_0<  \infty$ that determines the equilibrium's asymptotic stability 
for any $k\ge 2$.
(Note that the corollary does not give any information regarding $S(1)$ stability.) }

\begin{corollary}
{Let $a^\ast$ be a strict symmetric equilibrium action in a symmetric game. There exists an integer $k_0$ such that, for $k \ge 2$, action $a^\ast$ is $S(k)$ asymptotically stable iff $k \ge k_0$.}
\end{corollary}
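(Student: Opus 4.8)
The plan is to reduce the corollary to a monotonicity statement in $k$ for the combinatorial conditions of Theorem \ref{thm:unstable sn}, and then to establish an eventual-stability estimate. The engine is the monotonicity of $S(k)$ support recorded just before that theorem: if an action $a$ (weakly) $S(k)$ supports $a'$ against $a^\ast$, then it does so for every $k'<k$ as well. Hence, for each ordered pair $(a,a')$ of actions in $A^\ast\equiv A\backslash\{a^\ast\}$, the set of sample sizes at which $a$ supports $a'$ is a down-closed set $\{1,\dots,K(a,a')\}$, so the complementary property ``$a$ does not $S(k)$ support $a'$'' holds on an up-closed set of $k$'s. Since condition I is built from these properties using only unions (over the witness $a'\in A'$) and intersections (over $a\in A'$ and over the subsets $A'\subseteq A^\ast$), and unions and intersections of up-closed sets are up-closed, I would conclude that $\{k:\text{condition I holds}\}$ is an up-closed subset of $\{2,3,\dots\}$; the identical argument applies to condition II.

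Next I would prove finiteness, i.e.\ that $a^\ast$ is $S(k)$ stable for all large $k$. Because $a^\ast$ is a strict equilibrium, $u(a^\ast,\dots,a^\ast)-u(a',a^\ast,\dots,a^\ast)>0$ for every $a'\in A^\ast$. Rearranging the weak-support inequalities of Definition \ref{def:support} shows that weak direct support forces a quantity constant in $k$ to exceed $k$ times this positive gap, while weak support by spoiling forces a constant to be dominated by $k$ times a negative gap; both therefore fail once $k$ exceeds a bound depending only on the (finitely many) payoff entries. Thus there is $K$ such that for $k\ge K$ no action weakly $S(k)$ supports any other action against $a^\ast$, so condition II holds trivially (every action in every subset qualifies as a non-supported witness), and Theorem \ref{thm:unstable sn} (the sufficiency of condition II, valid for all $k\ge2$) yields $S(k)$ stability. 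In particular the up-closed set $\{k\ge2:\text{condition II holds}\}$ is nonempty, so its minimum $k_{\mathrm{II}}$ is finite.

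For generic games the conclusion is then immediate. As noted in Section \ref{subsec-defs}, support and weak support coincide generically, so conditions I and II coincide, and by Theorem \ref{thm:unstable sn} this common condition is equivalent to $S(k)$ asymptotic stability for all $k\ge2$. The stability set therefore equals the up-closed, nonempty set $\{k\ge2:\text{condition I holds}\}$, which is exactly $\{k:k\ge k_0\}$ for $k_0$ its minimum, and $k_0\le k_{\mathrm{II}}<\infty$.

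The remaining, and main, obstacle is the general (possibly non-generic) case, where Theorem \ref{thm:unstable sn} only sandwiches the stability set $T$ between the up-closed sets $\{k\ge k_{\mathrm{II}}\}$ (condition II, sufficient) and $\{k\ge k_{\mathrm{I}}\}$ (condition I, necessary), with $k_{\mathrm{I}}\le k_{\mathrm{II}}<\infty$. If $k_{\mathrm{I}}=k_{\mathrm{II}}$ we are done as above; the difficulty is the tie-prone window $k_{\mathrm{I}}\le k<k_{\mathrm{II}}$, in which an action is weakly but not strictly supported and the leading-order term of $w_{a',k}$ near $e_{a^\ast}$ is borderline. To close this window I would argue that true stability is itself monotone in $k$: revisiting the Jacobian and higher-order analysis of Appendix \ref{sec:B3}, the sign pattern deciding stability in these tie cases is governed by the same support data, which is monotone in $k$, so stability at $k$ forces stability at $k+1$. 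This makes $T$ up-closed and nonempty, hence $T=\{k:k\ge k_0\}$, completing the proof. (The restriction to $k\ge2$ is essential, since, as the prisoner's dilemma shows, condition I is not necessary when $k=1$ and $n=2$.)
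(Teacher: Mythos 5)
Your overall strategy is the same as the paper's: both arguments rest on (i) the observation that for all sufficiently large $k$ the (weak) support inequalities of Definition \ref{def:support} must fail for every pair of non-equilibrium actions, because one side grows like $k$ times the strict-equilibrium payoff gap $u(a^\ast,\dots,a^\ast)-u(a',a^\ast,\dots,a^\ast)>0$, so condition II holds vacuously and Theorem \ref{thm:unstable sn} gives stability; and (ii) the monotonicity of the support relation in $k$. Your reduction of condition I to a Boolean combination (intersections over $A'$ and over $a\in A'$, unions over the witness $a'$) of the up-closed sets $\{k:\ a \text{ does not } S(k) \text{ support } a'\}$ is correct and settles the generic case cleanly.

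The gap is your last paragraph. The assertion that ``true stability is itself monotone in $k$'' on the window $k_{\mathrm I}\le k<k_{\mathrm{II}}$ is exactly the content of the corollary on that window, and the justification offered --- that the sign pattern is ``governed by the same support data, which is monotone in $k$'' --- is not an argument: in the borderline (tie) cases the relevant off-diagonal entries of the linearization are those of the matrix $\bar{T}$ in Appendix \ref{sec:B3}, which are tie-breaking probabilities, and no monotonicity of those in $k$ is established anywhere; invoking it is circular. The paper avoids propagating stability directly by reordering the argument: it defines $k_0$ as the \emph{smallest} $k\ge2$ at which $a^\ast$ is $S(k)$ asymptotically stable (finite by step (i)), uses the necessity direction of Theorem \ref{thm:unstable sn} at $k_0$ to conclude that no action $S(k_0)$ supports another against $a^\ast$, and then pushes the absence of support --- rather than stability --- up to every $k>k_0$ by monotonicity, concluding stability from there. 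If you want to keep your architecture, the fix is that same reordering: anchor $k_0$ at the minimal stable $k$ and derive the non-existence of support pairs there from necessity, instead of trying to show directly that the set of stable $k$'s is up-closed.
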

\begin {proof}
{
Let $\bar k \ge 2$ be a sufficiently large integer such that, for any action $a' \neq a^\ast$ and action profiles $(a^1,...,a^n)$ and $(a'^1,...,a'^n),$ $$\bar k\cdot\left(u(a^\ast,...,a^\ast)-u(a',a^\ast,...,a^\ast)\right)>u(a^1,...,a^n)-u(a'^1,...,a'^n).$$  
The inequality implies that there exists no pair of actions $a,a'\neq a^\ast$ such that action $a$ weakly supports action $a'$ against $a^\ast$, which in view of Theorem \ref{thm:unstable sn} implies that $a^\ast$ is an $S(\bar k)$ asymptotically stable equilibrium. 
Let $k_0$ be the smallest number for which $a^\ast$ is an $S(k_0)$ asymptotically stable equilibrium. 
This implies that there is 
no pair of actions $a,a'\neq a^\ast$ such that action $a$ $S(k_0)$ supports $a'$ against $a^\ast$. The last observation at the end of Section \ref{subsec-defs} implies that for any $k>k_0$ there is 
no pair of actions $a,a'\neq a^\ast$ such that action $a$ $S(k)$ supports $a'$ against $a^\ast$, which implies that, for $k\geq 2$, action $a^\ast$ is an $S(k)$ asymptotically stable equilibrium iff $k \geq k_0$.
}
\end{proof}

\subsection{Applications}\label{subsec-applications}
{In this subsection we demonstrate the usefulness of the necessary and sufficient conditions identified in Theorem  \ref{thm:unstable sn} by studying the $S(k)$ asymptotic stability of strict equilibria in a number of applications.} 

{
The first two applications deal with two-action games  ($A=\{a^\ast,a'\}$). Observe that in such games condition I (resp., I') in Theorem \ref{thm:unstable sn} reduces to the requirement that the other action $a'$ does not $S(k)$ (resp., weakly) support itself against $a^\ast$.}

\paragraph{\textbf{Prisoner's dilemma}} {Cooperation cannot support itself by spoiling, because it always increases the payoff of a defecting opponent. For $k \ge 2$, cooperation directly (resp., weakly) $S(k)$ supports itself against defection iff $l<\frac{1}{k-1}$ (resp., $\leq\frac{1}{k-1}$), because 
$$u(c,c)+(k-1)\cdot u(c,d)>k \cdot u(d,d)\Leftrightarrow~1+(k-1)\cdot(-l)>0~\Leftrightarrow~l<\frac{1}{k-1}.$$
By Theorem \ref{thm:unstable sn}, the last finding implies that defection is asymptotically stable if $l>\frac{1}{k-1}$ and is not asymptotically stable if $l<\frac{1}{k-1}$, which proves Proposition \ref{prop:Sk-strict-PD}.}

\paragraph{\textbf{Public good games}} {
Consider a symmetric $n$-player game, with $n \ge 2$, where the set of actions is 
$A=\{c,nc\}$, with $c$ and $nc$ interpreted as contributing or not contributing, respectively, a fixed amount of some private good.
The amount of public good produced is $\varphi(l)$, where $l$ is the number of contributions and $\varphi$ is a nondecreasing production function with $\varphi(0)=0$ and $\varphi(1) < 1$. Each of the contributors gets a payoff of $\varphi(l)-1$,
while for a non-contributor the payoff is $\varphi(l)$. 
The assumption  $\varphi(1) < 1$ implies that no one contributing is a strict equilibrium 
(and it is the unique equilibrium if  $\varphi(l+1)- \varphi(l)< 1$ for all $l$). 
Observe that contributing cannot support itself by spoiling, and it directly $S(k)$ supports itself against non-contributing iff 
$$u(c,c,nc,...,nc)+(k-1)\cdot u(c,nc,...,nc)>k\cdot   u(nc,...,nc)\Leftrightarrow$$
$$\varphi(2)-1+(k-1) \cdot (\varphi(1)-1)>0 \Leftrightarrow k<1-\frac{1-\varphi(2)}{1-\varphi(1)} .$$
By Theorem \ref{thm:unstable sn}, the above finding implies that 
if $n \ge 3$ or $k \ge 2$, then a sufficient condition for 
not contributing to be $S(k)$ asymptotically stable is that
$$k>1-\frac{1-\varphi(2)}{1-\varphi(1)} .$$
The corresponding weak inequality is a necessary condition.}

\paragraph {\textbf{Coordination games}} 
{Consider a symmetric $n$-player game, with $n \ge 2$, where the set of actions is $A=\{a_1,...,a_m\}$, with $m \ge 2$. If all players choose the same action $a_j$, everyone gets payoff $u_j,$ where $u_1\ge u_2\ge...\ge u_m>0.$ If all players do not choose the same action, then everyone gets zero. The game admits $m$ strict equilibria: everyone playing $a_1$, everyone playing $a_2$, ..., everyone playing $a_m$.}
{Next we characterize the stability of the strict symmetric equilibrium action $a_l$, for any $1\le m$. No action $a_i\neq a_l$ $S(k)$ supports an action $a_j \neq a_l$ against $a_l$ by spoiling, because $$u(a_j,a_l,...,a_l)=0<u(a_l,a_i,a_l,..,a_l)+(k-1)\cdot u(a_l,...,a_l)=(k-1)\cdot u_l.$$
The only action that might directly $S(k)$ support $a_j \neq a_l$ against $a_l$ is $a_j$ itself, and this can happen only if $n=2$. This is so because if $a_i \neq a_j$ or $n>2$, then the following inequality holds: $$u(a_j,a_i,a_l,...,a_l)+(k-1)\cdot u(a_j,a_l,...,a_l)=0<k\cdot u(a_l,...,a_l)=k\cdot u_l.$$ If $n=2$, then $a_j$ directly supports itself against $a_l$ iff $$u(a_j,a_j)+(k-1)\cdot u(a_j,a_l)>k\cdot u(a_l,a_l) \Leftrightarrow u_j+0>k \cdot u_l.$$ By Theorem \ref{thm:unstable sn}, this implies that all the strict equilibria are $S(k)$ asymptotically stable for any $k$ if there are at least three players. In the two-player case, and for $k \ge 2$, the strict symmetric equilibrium action $a_l$ is $S(k)$ asymptotically stable if $u_l>\frac{u_1}{k}$ and it is not $S(k)$ asymptotically stable if $u_l<\frac{u_1}{k}.$}
\subsection{Comparison with \citet{sandholm2020stability}}
We conclude this section by comparing Theorem \ref{thm:unstable sn} with the conditions for stability of strict equilibria
presented in \citet[Section 5]{sandholm2020stability} (which, in turn, improve on the conditions presented in \citealp{sethi2000stability}). {For simplicity, the comparison focuses on the case of generic symmetric games.}\footnote{
{In addition}, our setting  concerns only revising agents who test all feasible actions. The more general setting studied by \citeauthor{sandholm2020stability} allows dynamics in which revising agents test only some of the actions.} {As in Theorem \ref{thm:unstable sn}, we use the notation $A^\ast \equiv A \backslash \{a^\ast\}$.} 

\citet{sandholm2020stability} identify two necessary conditions for stability of a strict equilibrium. They are the negations of conditions 1 and 2 in the following proposition. 
\paragraph{\textbf{Adaptation of Proposition 5.4} \cite[]{sandholm2020stability}} 
\label{thm:sufficient}
Let $a^\ast$ be a strict symmetric equilibrium {action} 
in a {generic} symmetric game. {For} $k\geq 2$, action $a^\ast$ is \emph{not} $S(k)$ asymptotically stable if  either:
\begin{enumerate}
    \item  $\exists A' \subseteq A^\ast$ such that every $a'\in A'$ is directly supported by some action in $A'$; or
    \item  $\exists A' \subseteq A^\ast$ such that every action $a'\in A'$ supports some action in $A'$.
\end{enumerate}
Theorem \ref{thm:unstable sn} strengthens this result by omitting ``directly'' from condition 1, thus weakening the condition. Moreover, it shows that this weaker condition (call it 1') is actually equivalent to condition 2, {and that both 1' and 2 are in fact} necessary \emph{and sufficient} conditions for asymptotic stability. 

\cite{sandholm2020stability} present the following sufficient condition for stability.
\begin{definition}
{\emph{Action} $a$ tentatively $S(k)$ supports action $a'$} by spoiling $a^\ast$ \emph{if}
$${ k\cdot u(a', a^\ast,\dots,a^\ast) > u(a^\ast, a,a^\ast,\dots,a^\ast)+(k-1) \cdot  u(a^\ast,\dots,a^\ast).}$$
\end{definition}
\paragraph{\textbf{Adaptation of Prop. 5.9} \cite[]{sandholm2020stability}} 
Let $a^\ast$ be a strict symmetric equilibrium action in a generic symmetric game. {For} $k\geq 2$, action $a^\ast$ is $S(k)$ asymptotically stable  if 
\begin{enumerate}
    \item [3.] there exists an ordering of  $A^\ast$ such that no action $a$ {in this set} directly $S(k)$ supports or tentatively $S(k)$ supports by spoiling $a^\ast$ any weakly higher action $a'$.  
\end{enumerate}

Theorem \ref{thm:unstable sn} strengthens this result by omitting ``tentatively'' from condition 3, thus weakening the condition and making it \emph{necessary} and sufficient for stability.\footnote{
Condition 3 is not necessary for asymptotic stability. In the {symmetric two-player} game defined by the following payoff matrix, action $a^\ast$ is $S(2)$ asymptotically stable as it satisfies the condition in Theorem \ref{thm:unstable sn}, yet it does not satisfy condition 3 due to action $a''$ tentatively supporting itself by spoiling.\newline 
$ ~~~~~~~~~~~~~~~~~~~~~~~~~~~~~~~~~~~~~~~~~~~~~~~~~~~~~~~~~~~~~~~~~~~~~~~~~~~~~~$
\begin{tabular}{|c|ccc|}
\hline 
$a^{*}$ & 8 & 9 & 3\tabularnewline
$a$' & 7 & 5 & 2\tabularnewline
$a''$ & 6 & 4 & 1\tabularnewline
\hline 
\end{tabular}
}
Sufficiency still holds because the weaker condition (call it 3') implies that the lowest action in every subset $A' \subseteq A^\ast$ does not $S(k)$ support any action in $A'$. Necessity holds because it is not very difficult to see that 3' is implied by condition {I} in Theorem \ref{thm:unstable sn}. {(Order $A^\ast$ by recursively removing an element that is not $S(k)$ supported against $a^\ast$ by any of the current elements.)} 

\section{{Asymmetric Games}}\label{sec:asymmetric games}
{In what follows we adapt our model and the characterization of $S(k)$ asymptotic stability to asymmetric games. }

{Each player $i$ has a finite set of actions $A_i$ and a payoff function $u_i: \prod_{j=1}^n {A_j} \ \rightarrow \mathbb{R}$. The player is represented by a distinct population of agents, the $i$-population, whose state $\alpha^i$ is an element of the unit simplex in ${\mathbb{R}}^{|A_i|}$. The state of all $n$ populations is given by  $\alpha=(\alpha^1,\alpha^2,\dots,\alpha^n) \in \Delta$, where $\Delta$ is the Cartesian product of the players' unit simplices.} 

{The population state $\alpha$ determines for each player $i$ the probability vector $w^i_{k}(\alpha(t))$ specifying the probability that each of the player's actions yields the highest mean payoff in $k$ trials, employing some tie-breaking rule. For any $k \ge 1$, the $k$-payoff sampling dynamic is given by
\begin{equation}\label{eq:asymmetric BEP}
    \dot{\alpha}^i = w^i_{k}(\alpha(t)) - \alpha^i(t).
\end{equation}A population state $\alpha^\ast$ is an $S(k)$ equilibrium if $ w_k^i(\alpha^\ast) = (\alpha^{\ast})^i$ for each player $i$. Asymptotic stability and global asymptotic stability are defined as in the symmetric case.  
}

{
The notion of supporting an action 
against an action profile $a^\ast=(a^\ast_1,a^\ast_2,\dots,a^\ast_n)$ is conceptually similar to that in symmetric games. 
 To present it in a formally similar way, consider the disjoint union $A^\ast=\dot\bigcup_{i=1}^n{(A_i \backslash \{a^\ast_i\})}$. 
Each element of $A^\ast$ is of the form $a_i$: a specific action of a specified player $i$ such that $a_i\ne a_i^\ast$. For such an element, $(a_i,a^{\ast}_{-i})$ denotes the action profile in which player $i$ plays $a_i$ and all the other players play according to $a^\ast$. For $a_i,a_j \in A^\ast$ with $i \neq j$, $(a_i,a_j,a^{\ast}_{-ij})$ denotes the action profile in which player $i$ plays $a_i$, player $j$ plays $a_j$, and all the other players play according to $a^*$.}

\begin{definition}\label{def:asymmetric_support}
{
\emph{For an action profile $a^\ast$ in an $n$-player game, and for $a_i,a_j \in A^\ast$:}
\begin{enumerate}
\item 
\emph{action} $a_i$ directly $S(k)$ supports $a_j$ against $a^\ast$ \emph{if} $i \ne j$ \emph{and}
$$u_j(a_i,a_j,a^{\ast}_{-ij}) + (k-1)\cdot u_j(a_j,a^{\ast}_{-j}) > k\cdot  u_j(a^\ast);\emph{ and}$$
\item
\emph{action} $a_i$ $S(k)$ supports $a_j$ by spoiling $a^\ast$ \emph{if} $i \ne j$ \emph{and}
$$ k\cdot u_j(a_j,a^{\ast}_{-j}) > u_j(a_i,a^{\ast}_{-i})+(k-1) \cdot  u_j(a^\ast) \text{~~~\emph{and}~~~}u_j(a_j,a^
\ast_{-j})>u_j(b_j,a^\ast_{-j})~~\forall b_j\neq a_j\in A^\ast.$$
\end{enumerate}
\emph{Action} $a_i$ $S(k)$ single supports, double supports, \emph{or just} supports \emph{action $a_j$ against $a^\ast$ if exactly one of conditions 1 and 2, both conditions, or at least one condition, respectively, holds.}
Weak \emph{$S(k)$ support, and the related terms, are defined similarly, except that the strict inequalities in 1 and 2 are replaced by weak inequalities.}}
\end{definition}
Next we adapt the characterization of $S(k)$ asymptotic stability to asymmetric games.
\begin{theorem}\label{thm:asymmetric unstable sn}
{For $k \ge 2$, a necessary condition for a strict equilibrium $a^\ast$ in an $n$-player game to be $S(k)$ asymptotically stable is that condition I (equivalently, I') in Theorem \ref{thm:unstable sn} holds, and a sufficient condition is that II (equivalently, II') holds.}
\end{theorem}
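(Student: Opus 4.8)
The plan is to reduce Theorem~\ref{thm:asymmetric unstable sn} to the symmetric case Theorem~\ref{thm:unstable sn} by a ``symmetrization'' construction, thereby reusing the eigenvalue analysis of the Jacobian that was carried out in Appendix~\ref{sec:B3}. First I would set up the linearization directly. At the monomorphic state $e_{a^\ast}$, the relevant dynamics are \eqref{eq:asymmetric BEP}, one simplex $\alpha^i$ per player. The state space is the product $\Delta=\prod_i \Delta(A_i)$, so the Jacobian of $w_k-\mathrm{id}$ at $e_{a^\ast}$ is a block matrix indexed by the players; the $(i,j)$ block records how a small mass of deviators in population $j$ changes the inflow $w^i_{a_i,k}$ to non-equilibrium actions $a_i$ in population $i$. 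The key linear-algebra observation I would exploit is that, to first order in $\epsilon$, the off-diagonal entries of this Jacobian (restricted to the coordinates $a_i\in A_i\setminus\{a_i^\ast\}$, since the equilibrium coordinate is pinned by $\sum_{a}\alpha^i_a=1$) are governed by exactly the same support relation as in the symmetric case: a single deviator $a_j$ in population $j$ contributes to the inflow of $a_i$ precisely when $a_i$ is directly $S(k)$ supported by $a_j$, or is the second-best reply and $a_j$ spoils $a^\ast_i$. This is the content of Definition~\ref{def:asymmetric_support}, which was designed to mirror Definition~\ref{def:support}.

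The main step is to show that the ``restricted deviation graph'' is combinatorially identical to the symmetric one, so that conditions~I and~II translate verbatim. Concretely, I would form the directed support graph on vertex set $A^\ast=\dot\bigcup_i(A_i\setminus\{a_i^\ast\})$, placing an edge $a_j\to a_i$ whenever $a_j$ $S(k)$ supports $a_i$ against $a^\ast$. The necessity argument then runs exactly as in the symmetric sketch: if some nonempty $A'\subseteq A^\ast$ has every element supported from within $A'$, seed the state with mass $\epsilon/|A'|$ on each action in $A'$ and mass $1-\epsilon$ on the equilibrium actions, and compute that each $w^i_{a_i,k}(\alpha)$ for $a_i\in A'$ is of order $\epsilon$ with a coefficient strictly exceeding $1$ (the factor $k(n-1)$ being replaced here by the appropriate count of sampling slots in which a supporting deviator can appear). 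This forces an eigenvalue of the linearization to exceed $0$ (equivalently, the Jacobian of $w_k$ has an eigenvalue above $1$), so $a^\ast$ is not Lyapunov stable, giving necessity of~I. For sufficiency under~II, I would iterate the removal argument: condition~II guarantees an action $a'\in A^\ast$ not weakly supported by anything, whose inflow is $O(\epsilon^2)$; after its mass decays one restricts to $A^\ast\setminus\{a'\}$ and repeats, yielding a triangular structure that makes every eigenvalue of the Jacobian strictly negative, hence asymptotic stability.

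The hard part will be verifying that the off-diagonal first-order coefficients genuinely coincide with those in the symmetric proof, because the asymmetric sampling process is subtly different: when player $i$ tests action $a_i$, each of the $k$ trials draws an independent opponent \emph{profile} from the other $n-1$ populations, and the probability of seeing a lone deviator $a_j$ in exactly one slot of exactly one population must be tallied carefully. I expect the bookkeeping to show that a deviator $a_j$ from population $j$ can appear in any of the $k$ trials and in the single $j$-coordinate of the opponent profile, so the relevant first-order weight is $k\cdot(\epsilon/|A'|)$ per supporting pair, with cross-population effects (the spoiling mechanism, which lowers the payoff of the $a^\ast_i$-sample) contributing through the second clause of Definition~\ref{def:asymmetric_support}. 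Once this tally is pinned down, the block-Jacobian inherits the same sign structure as the symmetric single-block Jacobian analyzed in Appendix~\ref{sec:B3}, and the equivalences I$\,\Leftrightarrow\,$I' and II$\,\Leftrightarrow\,$II' carry over unchanged because they are purely graph-theoretic statements about the support relation. I would therefore present the formal argument as a direct adaptation: state the block-Jacobian, cite the symmetric computation for the eigenvalue bounds, and remark that genericity is not needed for the two one-directional implications claimed here (only for the biconditional in the symmetric theorem).
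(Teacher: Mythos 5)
Your proposal is correct and follows essentially the same route as the paper: the paper likewise defines the support matrix $T$ over the disjoint union $A^\ast$, linearizes the $n$-population dynamics at $e_{a^\ast}$ to obtain $J=kT-I$ (with the first-order factor $k$ rather than $k(n-1)$, exactly as you tally it), and then reuses the spectral-radius argument and the non-generic treatment from the symmetric proof verbatim. The only cosmetic difference is that you phrase sufficiency via the triangular/iterative-removal structure while the paper phrases it as the spectral radius of $T$ being zero; these are equivalent by the paper's fact about nonnegative integer matrices.
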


{Obviously, the necessary conditions coincide with the sufficient ones if the game  is generic, in the standard sense. The proof of the theorem, which is very similar to that of Theorem \ref{thm:unstable sn}, is presented in Appendix \ref{sec:asymmetric proof}. }

{When the underlying game is symmetric, there are two different best experienced payoff dynamics that are applicable to it. The baseline, \emph{one-population} dynamics presented in Section \ref{sec:model} (specifically, (\ref{eq:BEP})) assumes a single population from which the players are sampled. Moreover, players are not assigned roles in the game; there is no player 1, player 2, etc. An alternative dynamics that can be applied to the game are the \emph{$n$-population} dynamics \ref{eq:asymmetric BEP}. Although meant for asymmetric games, they can be used to study symmetric games in which the players are arbitrarily numbered, with the $i$-population representing player\footnote{{The $n$-population dynamics can also capture environments  in which players from a single population play in all roles, the roles are observable, and a player conditions her action on her role.}}} $i$.

{In other evolutionary dynamics it is often the case that stability under the one-population dynamics is not equivalent to stability under the $n$-population dynamics. (For example, it is well known that the mixed equilibrium of a hawk-dove game is stable under the one-population replicator dynamics but is not stable under the two-population replicator dynamics.) Our next result shows that this is not the case here.}

\begin{corollary}\label{cor-stric-tsymmetric}
{For $ k \ge 2$, a strict symmetric equilibrium action $a^\ast$ in a symmetric $n$-player game is $S(k)$ asymptotically stable under the one-population dynamics (\ref{eq:BEP}) if and only if everyone playing $a^\ast$ is $S(k)$ asymptotically stable under the $n$-population dynamics (\ref{eq:asymmetric BEP}).}
\end{corollary}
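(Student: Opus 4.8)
The plan is to compare the two dynamics through their behavior at the monomorphic state $e_{a^\ast}$, exploiting the fact that the one-population system \eqref{eq:BEP} is exactly the restriction of the $n$-population system \eqref{eq:asymmetric BEP} to the symmetric diagonal. First I would dispose of the easy direction geometrically. The diagonal $D=\{(\beta,\dots,\beta):\beta\in\Delta\}$ is invariant under \eqref{eq:asymmetric BEP} (by symmetry of the game, identically distributed populations stay identically distributed), and on $D$ the $n$-population selection probabilities satisfy $w^i_k(\beta,\dots,\beta)=w_k(\beta)$, since in both models each opponent is an independent draw from $\beta$. Hence the restriction of \eqref{eq:asymmetric BEP} to the invariant manifold $D$ is literally \eqref{eq:BEP}. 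Consequently, if $e_{a^\ast}$ is asymptotically stable under the $n$-population dynamics, that stability is inherited within the invariant submanifold $D$, which is precisely one-population stability; this yields ``$n$-population stable $\Rightarrow$ one-population stable'' with no computation.

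For the converse I would linearize at $e_{a^\ast}$, using coordinates $x_a$ ($a\in A^\ast$) for \eqref{eq:BEP} and $x^i_a$ ($a\in A^\ast$, $i=1,\dots,n$) for \eqref{eq:asymmetric BEP}. Because $a^\ast$ is a strict equilibrium, $w_{a',k}(e_{a^\ast})=0$ for every $a'\ne a^\ast$, so each $w_{a',k}$ attains its minimum value $0$ there and all its first-order coefficients are nonnegative; let $M=(M_{a'a})_{a',a\in A^\ast}\ge 0$ be the first-order matrix of the per-opponent contribution, i.e.\ the probability rate at which a single appearance of $a$ in one opponent slot turns $a'$ into the sample winner (via direct support or spoiling). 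Two structural facts drive the argument: (i) in \eqref{eq:asymmetric BEP} player $i$ samples only against populations $j\ne i$, so $w^i_k$ is independent of $x^i$, whence the diagonal blocks of the Jacobian $J^{(n)}$ equal $-I$, while the symmetry of the game forces every off-diagonal block to equal the same matrix $M$; and (ii) in \eqref{eq:BEP} all $n-1$ opponents are drawn from the single population, so the contributions add and the one-population Jacobian is $J^{(1)}=-I+(n-1)M$.

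The key step is the spectral decomposition of $J^{(n)}=-I_n\otimes I+(J_n-I_n)\otimes M$, where $J_n$ is the all-ones matrix. Since $J_n-I_n$ has eigenvalue $n-1$ on $\1$ (the diagonal direction) and eigenvalue $-1$ of multiplicity $n-1$ on $\1^{\perp}$, diagonalizing the first Kronecker factor gives $\mathrm{spec}(J^{(n)})=\mathrm{spec}\big(-I+(n-1)M\big)\cup\mathrm{spec}(-I-M)$, the second set with multiplicity $n-1$. The first piece is exactly $\mathrm{spec}(J^{(1)})$ (consistent with the diagonal being invariant), and the second governs the transverse, population-differentiating perturbations, so $n$-population stability is equivalent to one-population stability together with stability of $-I-M$. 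I would close the gap with nonnegativity of $M$: by Perron--Frobenius, $\rho(M)$ is itself a real eigenvalue, so $\max_\lambda\mathrm{Re}\,\lambda=\rho(M)$ and $\min_\lambda\mathrm{Re}\,\lambda\ge-\rho(M)$. One-population stability forces every eigenvalue of $J^{(1)}$ to have negative real part, i.e.\ $\rho(M)<\frac{1}{n-1}\le 1$ (here $n\ge 2$ is used); but then every eigenvalue of $-I-M$ has real part at most $-1+\rho(M)<0$, so $-I-M$ is automatically stable. Hence one-population stability implies $n$-population stability, completing the equivalence.

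The main obstacle I anticipate is the rigorous justification of the two structural claims about $J^{(n)}$ — that the diagonal blocks reduce to $-I$ and that all off-diagonal blocks are the \emph{same} nonnegative matrix $M$ — which requires unpacking the sampling combinatorics exactly as in the Jacobian computations underlying Theorems \ref{thm:unstable sn} and \ref{thm:asymmetric unstable sn}, together with care about the tie-breaking rule on the measure-zero family of borderline (non-hyperbolic) games where an eigenvalue has zero real part. For those borderline cases I would fall back on the combinatorial characterizations: the support relation of the symmetric game and the one it induces on the disjoint union $\dot\bigcup_{i}(A\setminus\{a^\ast\})$ carry the same information once one observes that ``self-support'' in the symmetric sense is realized asymmetrically by a distinct role (available precisely because $n\ge 2$). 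Thus conditions I, I', II, II' hold on one side iff they hold on the other, and Theorems \ref{thm:unstable sn} and \ref{thm:asymmetric unstable sn} transfer the resulting stability verdict.
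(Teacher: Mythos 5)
Your proposal is correct in substance but takes a genuinely different route from the paper's. The paper's proof is purely combinatorial and only a few lines long: it observes that the support relations of Definitions \ref{def:support} and \ref{def:asymmetric_support} essentially coincide for a symmetric game, so the negation of condition I transfers between the two settings --- a ``bad'' subset $A'\subseteq A\setminus\{a^\ast\}$ in which every action is supported by some member lifts to the disjoint union $\dot\bigcup_{i=1}^n A'$, and conversely a bad subset of $\dot\bigcup_{i}(A\setminus\{a^\ast\})$ projects down to the set of actions appearing in it for at least one player --- and then simply invokes Theorems \ref{thm:unstable sn} and \ref{thm:asymmetric unstable sn}. You instead argue directly on the dynamics. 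Your invariant-diagonal argument for ``$n$-population stable $\Rightarrow$ one-population stable'' is clean, requires no linearization, and does not appear in the paper; and your Kronecker decomposition $\mathrm{spec}(J^{(n)})=\mathrm{spec}\left(-I+(n-1)M\right)\cup\mathrm{spec}(-I-M)$ is correct and strictly more informative than what the paper establishes, since it shows that any instability under the $n$-population dynamics already manifests along the diagonal and exhibits the transverse spectrum explicitly. The price is that you must re-derive the block structure of $J^{(n)}$, which is essentially the Jacobian computation already carried out in the proof of Theorem \ref{thm:asymmetric unstable sn} with $M=kT$. One caveat: the step ``one-population asymptotic stability forces every eigenvalue of $J^{(1)}$ to have negative real part'' is not a general fact about nonlinear systems; here it is rescued by the dichotomy of Fact \ref{thm:fact3} (the spectral radius of the nonnegative integer support matrix is either $0$ or at least $1$, so the relevant eigenvalue $k(n-1)\rho-1$ is either $-1$ or at least $1$ when $k\ge 2$), and in non-generic or borderline cases by your announced fallback to the combinatorial transfer of conditions I, I', II, II' --- which is precisely the paper's entire proof. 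So your argument is complete, but its safety net in the non-hyperbolic cases coincides with the paper's argument.
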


{The simple proof, which is given in Appendix \ref{subsec:cor-proof}, relies on the fact that our two definitions of an action supporting another action against a strict symmetric equilibrium (action) essentially coincide when the underlying game is symmetric.}

\subsection{Applications}

{We conclude this section with demonstrating the usefulness of Theorem \ref{thm:asymmetric unstable sn} by applying it to the study of $S(k)$ asymptotic stability of strict equilibria in asymmetric prisoner's dilemma and hawk-dove games.}
\begin{table}[h]
\centering %
\begin{tabular}{c|cc}
\multicolumn{3}{c}{Asymmetric Prisoner's Dilemma}\tabularnewline
 & \textcolor{red}{\emph{$c_{2}$ }} & \textcolor{red}{\emph{$d_{2}$}}\tabularnewline
\hline 
\textcolor{blue}{\emph{$c_{1}$}} & \textcolor{blue}{1}~~,~~\textcolor{red}{1} & \textcolor{blue}{~~-$l_{1}$}~,~\textcolor{red}{1+$g_{2}$}\tabularnewline
\textcolor{blue}{\emph{$d_{1}$}} & \textcolor{blue}{1+$g_{1}$}~,~\textcolor{red}{-$l_{2}$~~~} & \textcolor{blue}{0}~~,~~\textcolor{red}{0}\tabularnewline
\end{tabular}~~~~~~~~~~~~~~~%
\begin{tabular}{c|cc}
\multicolumn{3}{c}{Asymmetric Hawk-Dove}\tabularnewline
 & \textcolor{red}{\emph{$D_{2}$}} & \textcolor{red}{\emph{$H_{2}$}}\tabularnewline
\hline 
\textcolor{blue}{\emph{$D_{1}$}} & \textcolor{blue}{1}~~,~~\textcolor{red}{1} & \textcolor{blue}{~~$l_{1}$}~,~\textcolor{red}{1+$g_{2}$}\tabularnewline
\textcolor{blue}{\emph{$H_{1}$}} & \textcolor{blue}{1+$g_{1}$}~,~\textcolor{red}{$l_{2}$~~~} & \textcolor{blue}{0}~~,~~\textcolor{red}{0}\tabularnewline
\end{tabular}
\centering{}\caption{Payoff Matrices of Asymmetric Games (${\color{blue}g_{1},g_{2}},{\color{red}l_{1},l_{2}}>0$; 
in hawk-dove, also $\color{blue}l_{1},\color{red}l_{2}<1)\label{tab:asymmetric-games}$}
\end{table}

\paragraph{\textbf{Asymmetric prisoner's dilemma}}
{The left-hand side of Table \ref {tab:asymmetric-games} presents the payoff matrix of an asymmetric prisoner's dilemma, in which the unique equilibrium is $d=(d_1,d_2)$, mutual defection. 
Action $c_1$ cannot support $c_2$ by spoiling, because cooperation increases the payoff of a defecting opponent. It directly (weakly) supports $c_2$ against $d$ iff $l_2 < \frac{1}{k-1}$ (respectively,  $\le \frac{1}{k-1}$), because
$$u_2(c_1,c_2)+(k-1)\cdot u_2(d_1,c_2)>k \cdot u_2(d)\Leftrightarrow~1+(k-1)\cdot(-l_2)>0.$$
The same holds with 1 and 2 interchanged. This implies, by Theorem \ref{thm:asymmetric unstable sn}, that for any $k \ge 2$ mutual defection is $S(k)$ asymptotically stable if $\max(l_1,l_2)>\frac{1}{k-1}$ and is not $S(k)$ asymptotically stable if $\max(l_1,l_2)<\frac{1}{k-1}.$ 
Note that, if $l_1=l_2$, then in accordance with Corollary \ref{cor-stric-tsymmetric} the condition for stability coincides with that of the one-population model in the symmetric prisoner's dilemma.
}
\paragraph{\textbf{Asymmetric hawk-dove}}
{The right-hand side of Table \ref {tab:asymmetric-games} presents the payoff matrix of the asymmetric hawk-dove game. The game admits two strict equilibria, $(D_1,H_2)$ and $(H_1,D_2)$, in which one player plays hawk and the other plays dove. 
Observe that action $D_2$ can $S(k)$ support action $H_1$ against $(D_1,H_2)$ only by direct support, which is obtained iff
\[
u_{1}\left(H_{1},D_{2}\right)+\left(k-1\right)\cdot u_{1}\left(H_{1},H_{2}\right)>k\cdot u_{1}\left(D_{1},H_{2}\right)\,\Leftrightarrow\,1+g_{1}>k\cdot l_{1}.
\]
Action $H_1$ can $S(k)$ support action $D_2$ against $(D_1,H_2)$ only  by spoiling, which is obtained iff
\[
k\cdot u_{2}\left(D_{1},D_{2}\right)>u_{2}\left(H_{1},H_{2}\right)+\left(k-1\right)\cdot u_{2}\left(D_{1},H_{2}\right)\,\Leftrightarrow\,k>\left(k-1\right)\cdot\left(1+g_{2}\right)\,\Leftrightarrow\,g_{2}<\frac{1}{k-1}.
\]
By Theorem \ref{thm:asymmetric unstable sn}, this implies that $(D_1,H_2)$ is asymptotically stable if 
$g_{1}<k\cdot l_{1}-1$ or $g_{2}>\frac{1}{k-1}$ and is not asymptotically stable if $g_{1}>k\cdot l_{1}-1$ and $g_{2}<\frac{1}{k-1}$.
The conditions for asymptotic stability of the other strict equilibrium are obtained by interchanging 1 and 2. 
}

\addcontentsline{toc}{section}{\protect\numberline{}Appendix}

\renewcommand{\thesection}{\Alph{section}}
\setcounter{section}{0}
\section*{Appendix}
\section{Proofs}\label{sec:proof-Thm1}

\subsection{Proof of  \cref{thm:BEPdynamic}}
Recall that $p \equiv \alpha_c$ and $1-p \equiv \alpha_d$ denote the proportion of agents in the population playing actions $c$ and $d$, respectively. An agent's $c$-sample includes $k$ actions of the opponents. If $j \in \{0,...,k\}$ of these are $c$, then the agent's mean payoff (when playing $c$) is $j-(k-j)l$. 
Similarly, if $j' \in \{0, ... ,k\}$ of the sampled  actions in the agent's $d$-sample are $c$, then the mean payoff (when playing action $d$) is $j'(1+g)$. 
The difference between the two payoffs is
\[
(j-(k-j)l)-j'(1+g)=j-j'-((k-j)l+j'g).
\]This expression is clearly negative if $j \le j'$. If $j \ge j'+1$, it is positive, since $(k-j)l+j'g \le \frac{k-j+j'}{k-1} \le 1$ and the first inequality becomes an equality only if $j=k$ and $j'=0$ while the second one does so only if $j = j'+1$.  

These conclusions prove that the $c$-sample yields a superior payoff iff it includes more cooperations than the $d$-sample does. As the number of cooperators in each sample has a binomial distribution with parameters $k$ and $p$, we conclude that  $w_{c,k} = Win(k,p)$.

\subsection{Proof of \cref{thm:claim1}}\label{sec:A1}
A binomial random variable with parameters $k,p$ has a degenerate distribution if $p\in \{0,1\}$, and so $Win(k,0)=Win(k,1)=0$. From Eq. \eqref{eq:hn} we get $h_{k}(0) =0-0=0, h_{k}(1)=0-1=-1$.
To show that $h_k'(0) > 0$ for $k = 2,3,4, \dots$, we use the fact that

\begin{align*}
    h_k(p) & = Win(k,p)-p = 0.5(1-Tie(k,p))-p=0.5(1-\sum_{j=0}^k (f_{k,p}(j))^2)-p)\\
    & = 0.5(1-((1-p)^{2k}+O(p^2))-p  ~~~~~~~(O(p^2)\  \text{denotes the terms with degree} \geq 2)\\
    & =  0.5(1-(1-2pk+O(p^2))-p + O(p^2) = kp - p + O(p^2) = (k-1)p + O(p^2).
    \end{align*}
From the above expression, it follows that $h_k'(0) = k-1>0$ for $k>1$.

\subsection{Proof of \cref{thm:claim2}}\label{sec:A2}
Let $\{A_{j,p}\}_{j=1}^k,$ $\{B_{j,p}\}_{j=1}^k$ be $2k$ independent Bernoulli random variables with parameter $p.$ Then $X_{k,p} = \sum_{i=1}^k A_{j,p}$ and $Y_{k,p} = \sum_{j=1}^k B_{j,p}$ are i.i.d. binomial random variables with parameters $k$ and $p.$ Eq. \eqref{eq:hn} can be expressed as 
\begin{align}\label{eq:555}
    h_k(p) & = P(X_{k,p} > Y_{k,p}) - p \nonumber \\
    & = \frac{1}{2}\left(P(X_{k,p} > Y_{k,p}) + P(X_{k,p} < Y_{k,p})\right) - p \hspace{0.1in} (\text{since} \ X_{k,p} \ \text{and} \ Y_{k,p} \ \text{are} \ i.i.d. ) \nonumber \\
    & = \frac{1}{2}\left(1-P(X_{k,p} = Y_{k,p})\right) - p.  
\end{align}

\noindent For $j = 1,2,\dots,k,$ let $Z_{j,p} = A_{j,p} - B_{j,p}.$ Clearly, $\{Z_{j,p}\}_{j=1}^n$ are i.i.d., with distribution given by
\[
P(Z_{j,p} = -1) = P(Z_{j,p} = 1) = pq \ \text{ and } \ P(Z_{j,p} = 0) = 1-2pq,  
\]
where $q=1-p.$ 

Consider the characteristic function $\varphi_k(\cdot;p)$ of the random variable $Z_p^k = X_{k,p}-Y_{k,p} = \sum_{j=1}^k Z_{j,p}:$ 

\begin{align*}
\varphi_k(t;p) & = \mathbb{E}\left[e^{itZ_p^k}\right]  = \mathbb{E}\left[e^{it\sum_{j=1}^k Z_{j,p}}\right] = \left(\mathbb{E}[e^{itZ_{1,p}}]\right)^k \\
 & = \left(e^{it(-1)}pq + e^{it(1)}pq + e^{it(0)}(1-2pq)\right)^k = \left(1+pq(e^{-it}+e^{it}-2)\right)^k \\
  & = \left(1+2pq(\text{cos}\ (t)-1)\right)^k \hspace{0.45in} (\text{since} \ e^{-it}+e^{it} = 2\text{cos}\ (t)) \\
   & = \left(1-4p(1-p)\text{sin}^2\left(\frac{t}{2}\right)\right)^k \hspace{0.3in} \left(\text{since} \ q=1-p \ \text{and} \ \text{cos}\ (t) = 1-2\text{sin}^2\left(\frac{t}{2}\right)\right).
\end{align*}
The base of the last exponent, with power $k$, is an expression that is convex as a function of $p$, lies between 0 and 1 and, if $0 < p < 1$ and $t$ is not a whole multiple of $\pi$, is different from $0$ and $1$. Therefore, the same is true for $\varphi_k(t;p)$ and, if $0 < p <1$ and $t$ is not a whole multiple of $\pi,$ $\varphi_k(t;p) > \varphi_{k+1}(t;p)$ and $\lim_{k \rightarrow \infty}\varphi_k(t;p) = 0.$ It follows, in view of Eq. \eqref{eq:555} and the fact that (see \cref{thm:fact1}) 
\[
P(X_{k,p} = Y_{k,p}) = P(Z_p^k = 0) = \frac{1}{2\pi}\int_{-\pi}^{\pi}\varphi_k(t;p)dt, 
\]
that the function $h_k$ is concave and, for $0<p<1,$ the sequence $\left(h_k(p)\right)_{k=1}^{\infty}$ is strictly increasing and converges to $\frac{1}{2} - p.$ This completes the proof.

\begin{fact}
\label{thm:fact1}
$P(Z_p^k = 0) = \frac{1}{2\pi}\int_{-\pi}^{\pi}\varphi_k(t;p)dt.$
\end{fact}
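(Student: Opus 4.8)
The plan is to recognize Fact~\ref{thm:fact1} as nothing more than the Fourier inversion formula for an integer-valued (lattice) random variable, evaluated at $0$. The random variable $Z_p^k = X_{k,p} - Y_{k,p}$ is a difference of two binomials, hence integer-valued and supported on $\{-k, -k+1, \dots, k\}$. First I would expand its characteristic function as a finite Fourier series:
$$\varphi_k(t;p) = \mathbb{E}\left[e^{itZ_p^k}\right] = \sum_{n=-k}^{k} P(Z_p^k = n)\, e^{itn},$$
which is simply a trigonometric polynomial in $t$; because the sum is finite, no convergence questions arise.

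Next I would integrate both sides over $[-\pi,\pi]$ and interchange the finite sum with the integral. The one computation that matters is the orthogonality of the characters on $[-\pi,\pi]$:
$$\frac{1}{2\pi}\int_{-\pi}^{\pi} e^{itn}\, dt = \begin{cases} 1 & \text{if } n = 0, \\ 0 & \text{if } n \in \mathbb{Z}\setminus\{0\}, \end{cases}$$
which follows by elementary integration of $e^{itn}$ over a full period. Every term with $n \ne 0$ then vanishes, leaving only the $n=0$ term, namely $P(Z_p^k = 0)$, and hence
$$\frac{1}{2\pi}\int_{-\pi}^{\pi}\varphi_k(t;p)\, dt = P(Z_p^k = 0),$$
as asserted.

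There is no genuine obstacle in this argument, which is why it is relegated to a Fact rather than a Lemma: since $Z_p^k$ assumes only finitely many values, the exchange of summation and integration is immediate, and one need not invoke dominated convergence or the general characteristic-function inversion theorem. The only point worth recording explicitly is the orthogonality relation above; everything else is bookkeeping.
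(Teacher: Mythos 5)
Your proof is correct and is essentially the same argument as the paper's: both reduce the claim to the orthogonality relation $\frac{1}{2\pi}\int_{-\pi}^{\pi}e^{\mathrm{i}tn}\,dt=\mathds{1}_{\{n=0\}}$ for integers $n$, with the paper phrasing the interchange as swapping $\mathbb{E}$ with the integral and splitting on the event $\{Z_p^k=0\}$, while you write the expectation out as a finite trigonometric polynomial and integrate term by term. The two presentations are interchangeable and equally rigorous.
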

\begin{factproof} From the definition of $\varphi_k(t;p),$ we have the following:
\begin{align*}
    \int_{-\pi}^{\pi}\varphi_k(t;p)dt & = \int_{-\pi}^{\pi}\mathbb{E}\left[e^{itZ_p^k}\right]dt = \mathbb{E}\left[\int_{-\pi}^{\pi}e^{itZ_p^k}dt \right] = \mathbb{E}\left[\int_{-\pi}^{\pi}e^{itZ_p^k}\left(\mathds{1}_{\{Z_p^k = 0\}}+ \mathds{1}_{\{Z_p^k \neq 0\}}\right)dt\right] \\
    & = \mathbb{E}\left[\int_{-\pi}^{\pi}e^{itZ_p^k}\mathds{1}_{\{Z_p^k = 0\}}dt\right] + \mathbb{E}\left[\int_{-\pi}^{\pi}e^{itZ_p^k}\mathds{1}_{\{Z_p^k \neq 0\}}dt\right] \\
     & = \mathbb{E}\left[\int_{-\pi}^{\pi}1 \cdot \mathds{1}_{\{Z_p^k = 0\}}dt\right] + \mathbb{E}\left[0\right] = \mathbb{E}\left[2\pi \cdot \mathds{1}_{\{Z_p^k = 0\}}\right] = 2\pi\cdot P(Z_p^k = 0).
\end{align*}
From the above series of equalities, it follows that $P(Z_p^k = 0) = \frac{1}{2\pi}\int_{-\pi}^{\pi}\varphi_k(t;p)dt.$
\end{factproof}

\subsection{Proof of \cref{thm:unstable sn}}\label{sec:B3}
For completeness,  we present all details of the proof, although various steps are analogous to arguments presented in the proofs of \citet[Section 5]{sandholm2020stability}.

Suppose, first, that the game is generic. Let $T$ be the nonnegative $|A^\ast| \times |A^\ast|$ matrix  whose element in row $a'$ and column $a$ is
\[
 T_{a' a} =
  \begin{cases}
   2 \ \text{if action } a \text{ double } S(k) \text{ supports action } a' \text{ against } a^\ast \\
   1 \ \text{if action } a \text{ single } S(k) \text{ supports action } a' \text{ against } a^\ast \\
   0 \ \text{if action } a \text{ does not } S(k) \text{ support action } a' \text{ against } a^\ast \\
  \end{cases}
\]
We will now compute the Jacobian of the $k$-payoff sampling dynamic \eqref{eq:BEP} in the (monomorphic population state corresponding to the) strict symmetric equilibrium action $a^{*}.$ Suppose that the frequency $\alpha_{a^{*}}$ of action $a^{*}$ in the population is $1-\epsilon$, where $\epsilon>0$ is a small number. Denote by $\alpha^{*} \equiv \alpha|_{A^{*}}$ the frequencies of the actions in the set $A^{*}$, that is, $\alpha^{*}_{a} = \alpha_{a}$ for all $a \in A^{*}.$ Clearly, $\sum_{a \in A^{*}}\alpha^{*}_{a} = \epsilon$, which implies that the Euclidean norm of $\alpha^{*}$ is of order $\epsilon$, that is, $|\alpha^{*}| = \left(\sum_{a \in A^{*}}\alpha^{*2}_{a}\right)^{\frac{1}{2}} = O(\epsilon).$

The probability that a non-equilibrium action $a' \in A^\ast$ yields the best payoff  is roughly equal to the probability that it yields a higher payoff than the strict symmetric equilibrium action $a^\ast$ does when both actions are tested $k$ times in the population state $\alpha$. When $a' \in A^\ast$ is tested $k$ times, with a very high probability (of $(1-\epsilon)^{k(n-1)}$) it encounters the equilibrium
action $a^\ast$ each time. The probability that $a^\ast$ is encountered $k-2$ or fewer times is of order $\epsilon^2$ or higher, and these higher-order terms can be neglected for stability analysis. Action $a'$ yields a higher payoff than $a^\ast$ does in the following two cases:

Case 1: When $a'$ is tested, one of the $k(n-1)$ opponents plays a non-equilibrium action $a$ and the remaining opponents play $a^\ast.$ Action $a'$ obtains the highest mean payoff if $a$ directly $S(k)$ supports it against $a^\ast$, that is, $u(a', a,a^\ast,\dots,a^\ast) + (k-1)\cdot u(a', a^\ast,\dots,a^\ast) > k\cdot  u(a^\ast, a^\ast,\dots,a^\ast)$.

Case 2: When $a^\ast$ is tested, one of the $k(n-1)$ opponents plays $a'$ and the others play $a^\ast.$  Action $a'$ obtains the maximal mean payoff if  $a$ $S(k)$ supports it by spoiling $a^\ast$, that is, $ k\cdot u(a', a^\ast,\dots,a^\ast) > u(a^\ast, a,a^\ast,\dots,a^\ast)+(k-1) \cdot  u(a^\ast,\dots,a^\ast)$ and $u(a', a^\ast,\dots,a^\ast)>u(b,a^\ast,\dots,a^\ast)$ for all $b \notin \{a', a^\ast\}$.

The probability that action $a'$ yields the best payoff is therefore given by
\begin{align}\label{eq:BEP555}
    w_{a',k}(\alpha) & = k(n-1)\sum_{a \in A^\ast}T_{a' a}\alpha_{a} + O(|\alpha^\ast|^2),
\end{align}
and the $k$-payoff sampling dynamic \eqref{eq:BEP} can be written as 
\begin{align*}
    \dot{\alpha}_{a'} & = w_{a', k}(\alpha) - \alpha_{a'} = k(n-1)\sum_{a \in A^\ast}T_{a' a}\alpha_{a} - \alpha_{a'}+ O(|\alpha^\ast|^2) .
\end{align*}
In matrix notation, 
\begin{equation}\label{eq:nfreq}
    \dot{\alpha}^\ast = f(\alpha^\ast) \equiv (k(n-1)T-I)\alpha^\ast + O(|\alpha^\ast|^2),
\end{equation}
where $I$ is the $|A^\ast| \times |A^\ast|$ identity matrix and $O(|\alpha^\ast|^2)$ here is an $|A^\ast|$-dimensional vector with elements of order $|\alpha^\ast|^2$ or higher. Let $J$ denote the Jacobian matrix of $f$ evaluated at the origin:
\begin{equation*}
    J  = \left.\frac{\partial f(\alpha^\ast)}{\partial \alpha^\ast}\right|_{\alpha^\ast = \underbrace{(0, 0, \dots , 0)}_{|A^\ast| \ \text{zeros}}} = k(n-1)T-I.
\end{equation*}
The asymptotic stability of the system \eqref{eq:nfreq} can be analyzed by examining the eigenvalues of the Jacobian matrix $J.$ 

A sufficient condition for $a^\ast$ to be $S(k)$ asymptotically stable is that all the eigenvalues of $J$ have negative real parts
(see, e.g., \citealp[Corollary 8.C.2]{sandholm2010population}). A sufficient condition for it \emph{not} to be $S(k)$ asymptotically stable is that at least one of the eigenvalues has a positive real part. The first condition holds, in particular, if the only eigenvalue of $T$ is zero, in other words, if the spectral radius $\rho$ of that matrix is $0$, as this condition means that the only eigenvalue of $J$ is $-1$. The second condition holds if $\rho\ge 1$. This is because the spectral radius of a nonnegative matrix is an eigenvalue with a nonnegative eigenvector (\citealp{johnson1985matrix}, Theorem 8.3.1), and so $\rho\ge 1$ implies that $J$ has the eigenvalue $k(n-1)\rho -1\ge 2\cdot 1-1>0$, with a corresponding nonnegative eigenvector. It therefore suffices to show that $\rho=0$ holds if condition I in the theorem holds and $\rho\ge 1$ holds if the condition does not hold. {Observe that conditions I and I' can be rephrased as follows: every principal submatrix of $T$ has a row or column, respectively, where all entries are zero. Therefore, to complete the proof of the theorem, it remains only to establish the following fact (which in particular proves the equivalence of I and I').}  

\begin{fact}\label{thm:fact3}
Let $M$ be a square matrix of nonnegative integers, and let $\rho$ be its spectral radius. If every principal submatrix of $M$ has a row of zeros, then $\rho=0$. Otherwise, $\rho\ge 1$. The same is true when we replace ``row'' by ``column.''
\end{fact}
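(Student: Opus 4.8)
The plan is to reformulate the condition in graph-theoretic terms and then handle the two directions by elementary means. Associate with $M$ the directed graph $G$ on vertex set $\{1,\dots,n\}$ having an edge $i\to j$ precisely when $M_{ij}>0$. A principal submatrix of $M$ is indexed by a subset $S$ of vertices, and it ``has a row of zeros'' exactly when some $i\in S$ has no outgoing edge inside $S$. Thus the hypothesis ``every principal submatrix has a row of zeros'' says that every nonempty induced subgraph of $G$ contains a vertex of out-degree zero, while its negation says that some induced subgraph has all out-degrees positive. The first step is to record that these two statements amount to the acyclicity of $G$ and the existence of a directed cycle in $G$, respectively (a finite digraph in which every vertex has positive out-degree must, by following edges, repeat a vertex and hence contain a cycle).

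For the first direction (every principal submatrix has a row of zeros, so $G$ is acyclic), I would build an ordering of the vertices by greedy removal: the full matrix has a row of zeros, say in position $i_1$; delete $i_1$ and, by hypothesis, the remaining principal submatrix again has a row of zeros, say $i_2$; iterate. By construction $M_{i_s i_t}=0$ whenever $t\ge s$, so after reindexing the vertices as $i_1,\dots,i_n$ the matrix becomes strictly lower triangular. A strictly triangular matrix is nilpotent, so all its eigenvalues vanish and $\rho=0$. Note that this step uses only the zero pattern of $M$, not integrality.

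For the converse direction (some principal submatrix has no row of zeros, so $G$ has a directed cycle), I would extract a simple cycle $i_1\to i_2\to\dots\to i_\ell\to i_1$ and let $P$ be the matrix whose entries are $1$ on these cycle edges and $0$ elsewhere. Here the integrality assumption is used: each cycle edge has $M_{i_t i_{t+1}}\ge 1$, so $M\ge P$ entrywise. The matrix $P$ acts as a single cyclic permutation on $\{i_1,\dots,i_\ell\}$ and as zero on the remaining coordinates, so its eigenvalues are the $\ell$-th roots of unity together with zeros, giving $\rho(P)=1$. By the monotonicity of the spectral radius for nonnegative matrices ($0\le A\le B$ entrywise implies $\rho(A)\le\rho(B)$; see, e.g., \citealp{johnson1985matrix}), I would conclude $\rho=\rho(M)\ge\rho(P)=1$. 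The main obstacle is precisely this step: obtaining the sharp bound $\rho\ge 1$ rather than merely $\rho>0$, which is why dominating a permutation matrix (whose spectral radius is exactly $1$) is preferable to, say, a trace estimate $\operatorname{tr}(M^\ell)\ge 1$ that would only yield $\rho\ge n^{-1/\ell}$.

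Finally, the column version follows by applying the row version to $M^{\top}$: transposition preserves eigenvalues (hence $\rho$), carries rows of zeros to columns of zeros, and maps the principal submatrix indexed by $S$ to the transpose of the corresponding principal submatrix of $M$.
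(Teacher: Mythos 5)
Your proof is correct, but it takes a genuinely different route from the paper's. The paper handles the direction ``some principal submatrix has no zero row $\Rightarrow \rho \ge 1$'' by taking $v$ to be the $0$--$1$ indicator vector of the index set $\alpha$ of that submatrix, observing that integrality gives $Mv \ge v$, and invoking the standard result that $Mv\ge v$ for a nonnegative nonzero $v$ forces $\rho\ge 1$; you instead extract a directed cycle and dominate a partial permutation matrix, then use monotonicity of the spectral radius. These are close cousins (both rest on comparison theorems for nonnegative matrices from Horn and Johnson, and both use integrality only here), but your cycle extraction is a strictly stronger combinatorial step than the paper needs---the indicator-vector argument works directly from ``every row of $M'$ has a nonzero entry'' without ever producing a cycle. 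Where you diverge more substantially is the other direction: the paper argues by contraposition from $\rho>0$, using the Perron--Frobenius fact that $\rho$ is an eigenvalue with a nonnegative eigenvector $v$ and showing that the support of $v$ indexes a principal submatrix with no zero row; you argue directly, building a topological order by greedy removal of zero rows so that $M$ becomes strictly triangular after a permutation, hence nilpotent. Your version is more elementary (no Perron--Frobenius needed for that direction) and has the side benefit of making explicit the ordering that the paper only alludes to later, in its remark that condition I yields the ordering required by condition 3$'$; the paper's version is shorter and dispatches both the ``row'' statement and, implicitly, the equivalence with the ``column'' statement with less bookkeeping. Both proofs handle the column case identically, by transposition.
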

\begin{factproof} Suppose that  $\rho>0$. Let $v$ be a corresponding nonnegative right eigenvector. Since $(Mv)_i = \rho v_i > 0$ for every index $i$ with $v_i > 0$, the set $\alpha$ of all such indices defines a principal submatrix $M'$ (obtained from $M$ by deleting all rows and all columns with indices not in $\alpha$) with the property that every row includes at least one nonzero entry.

Conversely, suppose that $M$ has a principal submatrix $M'$, defined by some set of indices $\alpha$, with the above property. Let $v$ be a column vector of $0$'s and $1$'s where an entry is $1$ if and only if its index lies in $\alpha$. It is easy to see that $Mv \ge v$. This vector inequality implies that  $\rho\ge 1$ (\citealp{johnson1985matrix}, Theorem 8.3.2). 

The proof for ``column'' is obtained by replacing $M$ by its transpose. 
\end{factproof}

{We now drop the assumption that the game is generic. This change means that cases 1 and 2 need to be extended by replacing the respective strict inequalities with weak ones. When a weak inequality in case 1 or 2 holds as equality, the probability of moving to action $a'$ depends on the tie-breaking rule. Adding these probabilities to $T_{a' a}$, for all $a',a \in A^\ast$, defines a new matrix $\bar{T} \ge T$, which replaces $T$ in \eqref{eq:nfreq}. If the probabilities were replaced by $1$'s, the result would be a matrix $\bar{\bar{T}} \ge \bar{T}$, where all entries are integers. In view of Fact \ref{thm:fact3}, conditions II and II' are both equivalent to the condition that the spectral radius of $\bar{\bar{T}}$ is $0$. That condition implies that the spectral radius of $\bar{T}$ is also $0$, which implies that $a^\ast$ is $S(k)$ asymptotically stable.} If condition I or I' does not hold, then the spectral radius of $T$ is $1$ or greater. The same is then true for $\bar{T}$, which implies that $a^\ast$ is not $S(k)$ asymptotically stable.      

\subsection{Proof of \cref{thm:asymmetric unstable sn}}\label{sec:asymmetric proof}
{Let $T$ be the $|A^\ast| \times |A^\ast|$ matrix defined exactly as in the proof of \cref{thm:unstable sn}, and denote by $\alpha^{*} \equiv \alpha|_{A^{*}}$ the vector of dimension $\sum_{i=1}^n{(|A_i|-1)}$ whose components are the frequencies of the actions in the set $A^{*}$ (that is, the players' non-equilibrium actions). By arguments similar to those employed in that proof, if the game is generic, then the Jacobian $J$ of the $k$-payoff sampling dynamic \eqref{eq:asymmetric BEP} is given by $J=kT-I$, so that
$$    \dot{\alpha}^\ast = (kT-I)\alpha^\ast + O(|\alpha^\ast|^2).$$
The rest of the proof, including the treatment of the non-generic case, is essentially the same as for \cref{thm:unstable sn}. }

\subsection{{Proof of Proposition \ref{pro:k-2-3}: Global Stability for $k \in \{2,3\}$}}\label{sub-k-2-3}

In this subsection, we characterize the $S(2)$ {and $S(3)$ globally asymptotically stable equilibria} explicitly for all parameter configurations $g, l$ {in the prisoner's dilemma}.

{$S(2)$ analysis:}
In Theorem \ref{thm:mainresult} we solved the case $g, l < 1.$ Here, we consider the remaining three cases.
Recall that, when testing each action twice, we have:
\[
 \text{
 When a player samples }\  c \text{ she}
  \begin{cases}
   \text{gets} \ 2 \ \text{with probability} \ p^2 \\
   \text{gets} \ 1-l \ \text{with probability} \ 2p(1-p) \\
   \text{gets} \ -2l \ \text{with probability} \ (1-p)^2
  \end{cases}
\]
\[
 \text{When a player samples}\ d \text{ she}
  \begin{cases}
   \text{gets} \ 2(1+g) \ \text{with probability} \ p^2 \\
   \text{gets} \ 1+g \ \text{with probability} \ 2p(1-p) \\
   \text{gets} \ 0 \ \text{with probability} \ (1-p)^2
  \end{cases}
\]

\textbf{Case I:} $l < 1 < g.$ 
Action $c$ has a higher mean payoff iff the $c$-sample includes at least one cooperation and the $d$-sample does not include any cooperation. Thus, the 
$2$-\textit{payoff sampling dynamic} in this case is given by
\begin{align*}
    \dot{p}& = p^2(1-p)^2 + 2p(1-p)(1-p)^2 - p = (1-p)^2(p^2 + 2p(1-p)) - p\\
        & = (1-p)^2(1-(1-p)^2) - p = (1-p)^2 - (1-p)^4 - p.
\end{align*}
The rest points of the above dynamic are $0$ and $0.245.$ It is straightforward to verify that $0$ is unstable and that $0.245$ is globally stable.

\textbf{Case II:} $g < 1 < l.$ 
Action $c$ has a higher mean payoff iff the $c$-sample includes two cooperations, while the $d$-sample includes at most one cooperation. The $2$-\textit{payoff sampling dynamic} in this case is given by $\dot{p} = p^2(1-p^2) - p$. The unique rest point is $0,$ which is globally stable.

\textbf{Case III:} $g, l > 1.$ 
Action $c$ has a higher mean payoff iff the $c$-sample includes two cooperations, while the $d$-sample does not include any cooperation.
The $2$-\textit{payoff sampling dynamic} in this case is given by $\dot{p} = p^2(1-p)^2 - p$. As in the previous case, the unique rest point is $0,$ which is globally stable.

{$S(3)$ analysis:} {In Theorem \ref{thm:mainresult} we solved the case $g, l < \frac{1}{2}.$ Here, we consider the remaining cases.
Recall that, when testing each action thrice, we have:}

\[
 \text{When a player samples}\  c \text{ she}
  \begin{cases}
   \text{gets} \ 3 \ \text{with probability} \ p^3 \\
   \text{gets} \ 2-l \ \text{with probability} \ 3p^2(1-p) \\
   \text{gets} \ 1-2l \ \text{with probability} \ 3p(1-p)^2 \\
   \text{gets} \ -2l \ \text{with probability} \ (1-p)^3
  \end{cases}
\]
\[
 \text{When a player samples}\ d \text{ she}
  \begin{cases}
   \text{gets} \ 3(1+g) \ \text{with probability} \ p^3 \\
   \text{gets} \ 2(1+g) \ \text{with probability} \ 3p^2(1-p) \\
   \text{gets} \ 1+g \ \text{with probability} \ 3p(1-p)^2 \\
   \text{gets} \ 0 \ \text{with probability} \ (1-p)^3
  \end{cases}
\]

{
\textbf{Case I:} $l < \frac{1}{2}, \frac{1}{2} < g < 2$, and $g+l < 1.$
Action $c$ has a higher mean payoff iff, when the $c$-sample includes at least two cooperations, the $d$-sample includes at most one cooperation, or when the $c$-sample includes exactly one cooperation, the $d$-sample does not include any cooperation. Thus, the $3$-\textit{payoff sampling dynamic} in this case is given by
\begin{align*}
\dot{p} &= p^3(3p(1-p)^2+(1-p)^3) + 3p^2(1-p)(3p(1-p)^2+(1-p)^3) + 3p(1-p)^2 (1-p)^3 - p\\
& = p^2(1-p)^2(3-2p)(1+2p) + 3p(1-p)^5 - p.
\end{align*}
The rest points of the above dynamic are $0$ and $0.323.$ It is straightforward to verify that $0$ is unstable and that $0.323$ is globally stable.}

{
\textbf{Case II:} $l < \frac{1}{2}, \frac{1}{2} < g < 2$, and $g+l > 1.$
Action $c$ has a higher mean payoff iff, when the $c$-sample includes three cooperations, the $d$-sample includes at most one cooperation, or when the $c$-sample includes either one or two cooperations, the $d$-sample does not include any cooperation. Thus, the $3$-\textit{payoff sampling dynamic} in this case is given by
\begin{align*}
\dot{p} & = p^3(3p(1-p)^2+(1-p)^3) + 3p^2(1-p)(1-p)^3 + 3p(1-p)^2 (1-p)^3 - p \\
 & = p^3(1-p)^2(1+2p) +3p(1-p)^4 - p. 
\end{align*}
The rest points of the above dynamic are $0$ and $0.250.$ It is straightforward to verify that $0$ is unstable and that $0.250$ is globally stable.}

{
\textbf{Case III:} $l < \frac{1}{2}, g > 2.$
Action $c$ has a higher mean payoff iff, when the $c$-sample includes at least one cooperation, the $d$-sample does not include any cooperation. Thus, the $3$-\textit{payoff sampling dynamic} in this case is given by
\begin{align*}
\dot{p} & = p^3(1-p)^3 + 3p^2(1-p)(1-p)^3 + 3p(1-p)^2 (1-p)^3 - p \\
 & = (1-(1-p)^3)(1-p)^3 - p = (1-p)^3 - (1-p)^6 - p.
\end{align*}
The rest points of the above dynamic are $0$ and $0.245.$ It is straightforward to verify that $0$ is unstable and that $0.245$ is globally stable.}

{
\textbf{Case IV:} $\frac{1}{2} < l < 2, g < \frac{1}{2}$ and $g+l<1.$
Action $c$ has a higher mean payoff iff, when the $c$-sample includes three cooperations, the $d$-sample includes at most two cooperations, or when the $c$-sample includes exactly two cooperations, the $d$-sample includes at most one cooperation. Thus, the $3$-\textit{payoff sampling dynamic} in this case is given by
\begin{align*}
\dot{p} & = p^3(1-p^3) + 3p^2(1-p)(3p(1-p)^2)+(1-p)^3) - p \\
 & = p^3(1-p^3) + 3p^2(1-p)^3(1+2p) - p.
\end{align*}
The unique rest point is $0,$ which is globally stable.}

{
\textbf{Case V:} $\frac{1}{2} < l < 2, g < \frac{1}{2}$, and $g+l>1.$
Action $c$ has a higher mean payoff iff, when the $c$-sample includes three cooperations, the $d$-sample includes at most two cooperations, or when the $c$-sample includes exactly two cooperations, the $d$-sample does not include any cooperation. Thus, the $3$-\textit{payoff sampling dynamic} in this case is given by
\begin{align*}
\dot{p} & = p^3(1-p^3) + 3p^2(1-p)(1-p)^3  - p = p^3(1-p^3) + 3p^2(1-p)^4  - p.
\end{align*}
The unique rest point is $0,$ which is globally stable.}

{
\textbf{Case VI:} $\frac{1}{2} < l < 2, \frac{1}{2} < g < 2.$
Action $c$ has a higher mean payoff if, when the $c$-sample includes three cooperations, the $d$-sample includes at most one cooperation, or when the $c$-sample includes exactly two cooperations, the $d$-sample does not include any cooperation. Thus, the $3$-\textit{payoff sampling dynamic} in this case is given by
\begin{align*}
\dot{p} & = p^3(3p(1-p)^2+(1-p)^3) + 3p^2(1-p)(1-p)^3 - p \\
& = p^3(1-p)^2(1+2p) +3p^2(1-p)^4 - p. 
\end{align*}
The unique rest point is $0,$ which is globally stable.}

{
\textbf{Case VII:} $\frac{1}{2} < l < 2, g > 2.$
Action $c$ has a higher mean payoff iff, when the $c$-sample includes at least two cooperations, the $d$-sample does not include any cooperation. Thus, the $3$-\textit{payoff sampling dynamic} in this case is given by
\begin{align*}
\dot{p} & = p^3(1-p)^3 + 3p^2(1-p)(1-p)^3 - p = p^2(1-p)^3(3-2p) - p. 
\end{align*}
The unique rest point is $0,$ which is globally stable.}

{
\textbf{Case VIII:} $l > 2, g < \frac{1}{2}.$
Action $c$ has a higher mean payoff iff, when the $c$-sample includes three cooperations, the $d$-sample includes at most two cooperations. Thus, the $3$-\textit{payoff sampling dynamic} in this case is given by
$\dot{p} = p^3(1-p^3) - p.$ The unique rest point is $0,$ which is globally stable.}

{
\textbf{Case IX:} $l > 2, \frac{1}{2} < g < 2.$
Action $c$ has a higher mean payoff iff, when the $c$-sample includes three cooperations, the $d$-sample includes at most one cooperation. Thus, the $3$-\textit{payoff sampling dynamic} in this case is given by
\begin{align*}
\dot{p} & = p^3(3p(1-p)^2+(1-p)^3) - p  = p^3(1-p)^2(1+2p) - p. 
\end{align*}
The unique rest point is $0,$ which is globally stable.}

{
\textbf{Case X:} $l > 2, g > 2.$
Action $c$ has a higher mean payoff iff, when the $c$-sample includes three cooperations, the $d$-sample does not include any cooperation. Thus, the $3$-\textit{payoff sampling dynamic} in this case is given by $\dot{p} = p^3(1-p)^3 - p.$ The unique rest point is $0,$ which is globally stable.}

\subsection{Proof of Corollary \ref{cor-stric-tsymmetric}}\label{subsec:cor-proof}
{
Suppose that action $a^\ast$ is not $S(k)$ asymptotically stable under the one-population dynamics, and so there is a subset $A' \subseteq A\backslash \{a^\ast\}$ such that all actions in $A'$ are supported against $a^\ast$ by actions in $A'$. Let $\bar A'=\dot\bigcup_{i=1}^n{A'}$ be the disjoint union of $n$ copies of $A'.$ It follows immediately from Definitions \ref{def:support} and \ref{def:asymmetric_support} and the symmetry of the game that all actions in $\bar A'$ are supported by actions in $\bar A'$ against $\bar{a}^\ast \equiv (a^\ast,a^\ast,\dots,a^\ast)$. Therefore, the strategy profile $\bar{a}^\ast$ is not $S(k)$ asymptotically stable under the $n$-population dynamics.}

{Conversely, suppose that the last conclusion holds, so that there is a subset $A' \subseteq A^\ast \equiv  \dot\bigcup_{i=1}^n{(A \backslash \{a^\ast\})}$ such that all actions in $A'$ are supported by actions in $A'$ against $\bar{a}^\ast$. Let $\bar A'=\{a\in A \mid \text{there is some player } i \text{ and a corresponding action } a_i \in A'\text{ with }a=a_i\}$ be the set of all actions that are included in $A'$ for at least one player. It is easy to see that all actions in $\bar A'$ are supported against $a^\ast$ by actions in $\bar A'$, and so action $a^\ast$ is not $S(k)$ asymptotically stable under the one-population dynamics.}

\section{{Values of $g$ and $l$ in Prisoner's Dilemma Experiments}}\label{sec-experiments-g-l}
{
Figure \ref{figure_gl} shows the values of $g$ and $l$ in the 29 experiments of the one-shot prisoner's dilemma (taken from 16 papers) as summarized in the meta-study of \citet[Table A.3]{mengel2018risk}. The figure shows that most of these experiments satisfy the condition for global stability of partial cooperation for $k=2$ (namely, $l<1$), and 
quite a few of them  also satisfy the condition for $k=3$ ($l<0.5$).}

\begin{figure}

\caption{{Values of $g$ and $l$ in the 29 Experiments Summarized in \citet[Table A.3]{mengel2018risk} \label{figure_gl}}}

\includegraphics[scale=0.6]{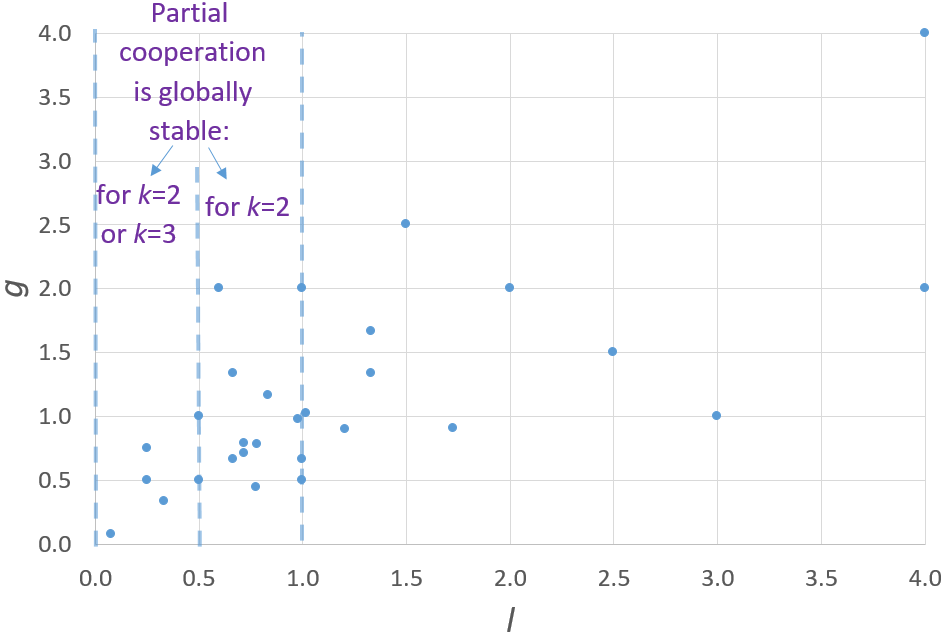}
\end{figure}
\newpage
\mybibliography

\begin{thebibliography}{}

\bibitem[Burton-Chellew et~al., 2017]{burton2017social}
Burton-Chellew, M.~N., El~Mouden, C., and West, S.~A. (2017).
\newblock Social learning and the demise of costly cooperation in humans.
\newblock {\em Proceedings of the Royal Society B: Biological Sciences},
  284\text{:20170067}.

\bibitem[C{\'a}rdenas et~al., 2015]{cardenas2015stable}
C{\'a}rdenas, J., Mantilla, C., and Sethi, R. (2015).
\newblock Stable sampling equilibrium in common pool resource games.
\newblock {\em Games}, 6(3):299--317.

\bibitem[Chmura and G{\"u}th, 2011]{chmura2011minority}
Chmura, T. and G{\"u}th, W. (2011).
\newblock The minority of three-game: An experimental and theoretical analysis.
\newblock {\em Games}, 2(3):333--354.

\bibitem[Fehr and Schmidt, 1999]{fehr1999theory}
Fehr, E. and Schmidt, K.~M. (1999).
\newblock A theory of fairness, competition, and cooperation.
\newblock {\em The Quarterly Journal of Economics}, 114(3):817--868.

\bibitem[Heller and Mohlin, 2018]{heller2018social}
Heller, Y. and Mohlin, E. (2018).
\newblock Social learning and the shadow of the past.
\newblock {\em Journal of Economic Theory}, 177:426--460.

\bibitem[Horn and Johnson, 1985]{johnson1985matrix}
Horn, R.~A. and Johnson, C.~R. (1985).
\newblock {\em Matrix Analysis}.
\newblock Cambridge University Press.

\bibitem[Kosfeld et~al., 2002]{kosfeld2002myopic}
Kosfeld, M., Droste, E., and Voorneveld, M. (2002).
\newblock A myopic adjustment process leading to best-reply matching.
\newblock {\em Games and Economic Behavior}, 40(2):270--298.

\bibitem[Kreindler and Young, 2013]{kreindler2013fast}
Kreindler, G.~E. and Young, H.~P. (2013).
\newblock Fast convergence in evolutionary equilibrium selection.
\newblock {\em Games and Economic Behavior}, 80:39--67.

\bibitem[Mantilla et~al., 2018]{mantilla2018efficiency}
Mantilla, C., Sethi, R., and C{\'a}rdenas, J.~C. (2018).
\newblock Efficiency and stability of sampling equilibrium in public goods
  games.
\newblock {\em Journal of Public Economic Theory}, 22(2):355--370.

\bibitem[McKelvey and Palfrey, 1995]{mckelvey1995quantal}
McKelvey, R.~D. and Palfrey, T.~R. (1995).
\newblock Quantal response equilibria for normal form games.
\newblock {\em Games and economic behavior}, 10(1):6--38.

\bibitem[Mengel, 2018]{mengel2018risk}
Mengel, F. (2018).
\newblock Risk and temptation: A meta-study on prisoner's dilemma games.
\newblock {\em The Economic Journal}, 128(616):3182--3209.

\bibitem[Mi{\k{e}}kisz and Ramsza, 2013]{mikekisz2013sampling}
Mi{\k{e}}kisz, J. and Ramsza, M. (2013).
\newblock Sampling dynamics of a symmetric ultimatum game.
\newblock {\em Dynamic Games and Applications}, 3(3):374--386.

\bibitem[Nax et~al., 2016]{nax2016learning}
Nax, H.~H., Burton-Chellew, M.~N., West, S.~A., and Young, H.~P. (2016).
\newblock Learning in a black box.
\newblock {\em Journal of Economic Behavior \& Organization}, 127:1--15.

\bibitem[Nax and Perc, 2015]{nax2015directional}
Nax, H.~H. and Perc, M. (2015).
\newblock Directional learning and the provisioning of public goods.
\newblock {\em Scientific Reports}, 5(1):1--6.

\bibitem[Osborne and Rubinstein, 1998]{osborne1998games}
Osborne, M.~J. and Rubinstein, A. (1998).
\newblock Games with procedurally rational players.
\newblock {\em American Economic Review}, 88(4):834--847.

\bibitem[Oyama et~al., 2015]{oyama2015sampling}
Oyama, D., Sandholm, W.~H., and Tercieux, O. (2015).
\newblock Sampling best response dynamics and deterministic equilibrium
  selection.
\newblock {\em Theoretical Economics}, 10(1):243--281.

\bibitem[Rabin, 1993]{rabin1993incorporating}
Rabin, M. (1993).
\newblock Incorporating fairness into game theory and economics.
\newblock {\em American Economic Review}, 83(5):1281--1302.

\bibitem[Ramsza, 2005]{ramsza2005stability}
Ramsza, M. (2005).
\newblock Stability of pure strategy sampling equilibria.
\newblock {\em International Journal of Game Theory}, 33(4):515--521.

\bibitem[Rowthorn and Sethi, 2008]{rowthorn2008procedural}
Rowthorn, R. and Sethi, R. (2008).
\newblock Procedural rationality and equilibrium trust.
\newblock {\em The Economic Journal}, 118(530):889--905.

\bibitem[Rustichini, 2003]{rustichini2003equilibria}
Rustichini, A. (2003).
\newblock Equilibria in large games with continuous procedures.
\newblock {\em Journal of Economic Theory}, 111(2):151--171.

\bibitem[Salant and Cherry, 2020]{salant2020statistical}
Salant, Y. and Cherry, J. (2020).
\newblock Statistical inference in games.
\newblock {\em Econometrica}, 88(4):1725--1752.

\bibitem[Sandholm, 2001]{sandholm2001almost}
Sandholm, W.~H. (2001).
\newblock Almost global convergence to p-dominant equilibrium.
\newblock {\em International Journal of Game Theory}, 30(1):107--116.

\bibitem[Sandholm, 2010]{sandholm2010population}
Sandholm, W.~H. (2010).
\newblock {\em Population Games and Evolutionary Dynamics}.
\newblock MIT Press.

\bibitem[Sandholm et~al., 2019]{sandholm2019best}
Sandholm, W.~H., Izquierdo, S.~S., and Izquierdo, L.~R. (2019).
\newblock Best experienced payoff dynamics and cooperation in the centipede
  game.
\newblock {\em Theoretical Economics}, 14(4):1347--1385.

\bibitem[Sandholm et~al., 2020]{sandholm2020stability}
Sandholm, W.~H., Izquierdo, S.~S., and Izquierdo, L.~R. (2020).
\newblock Stability for best experienced payoff dynamics.
\newblock {\em Journal of Economic Theory}, 185:104957.

\bibitem[Sethi, 2000]{sethi2000stability}
Sethi, R. (2000).
\newblock Stability of equilibria in games with procedurally rational players.
\newblock {\em Games and Economic Behavior}, 32(1):85--104.

\bibitem[Sethi, 2019]{Raj}
Sethi, R. (2019).
\newblock Procedural rationality in repeated games.
\newblock Unpublished manuscript, Barnard College, Columbia University.

\bibitem[Spiegler, 2006a]{spiegler2006competition}
Spiegler, R. (2006a).
\newblock Competition over agents with boundedly rational expectations.
\newblock {\em Theoretical Economics}, 1(2):207--231.

\bibitem[Spiegler, 2006b]{spiegler2006market}
Spiegler, R. (2006b).
\newblock The market for quacks.
\newblock {\em The Review of Economic Studies}, 73(4):1113--1131.

\end{thebibliography}
\end{document}